\newcommand{\abs}[1]{\lvert #1 \rvert}
\newcommand{\card}[1]{\abs{#1}}
\newcommand{\norm}[1]{{\|{#1}\|}}
\newcommand{\normsq}[1]{{\|{#1}\|}^2}
\newcommand{\ind}[1]{{\mathbb I}_{\{#1\}}}
\newcommand{\wedef}{\stackrel{\triangle}{=}}
\newcommand{\smfrac}[2]{{\textstyle{\frac{#1}{#2}}}}
\newcommand{\mU}{{\mathcal{U}}}
\newcommand{\mP}{{\mathcal{P}}}
\newcommand{\mC}{{\mathcal{C}}}
\newcommand{\mJ}{{\mathcal{J}}}
\newcommand{\bp}{{\mathbf{p}}}
\newcommand{\bx}{{\mathbf{x}}}
\newcommand{\by}{{\mathbf{y}}}
\newcommand{\bz}{{\mathbf{z}}}
\newcommand{\mL}{{\mathcal{L}}}
\newcommand{\Ex}{{\mathbb{E}}}
\newtheorem{lemma}{Lemma}[section]
\newtheorem{example}{Example}
\newtheorem{proposition}{Proposition}[section]
\newtheorem{theorem}{Theorem}
\newtheorem{remark}{Remark}
\newcommand{\sinr}{\mathsf{SINR}}
\begin{document}

\title{Learning Based Uplink Interference Management in 4G LTE Cellular Systems}
\author{Supratim Deb, Pantelis Monogioudis
\thanks{The authors are with
Wireless Chief Technology Office, Alcatel-Lucent USA.
e-mail: {\it first\_name.last\_name@alcatel-lucent.com}}}

\maketitle

\begin{abstract}

LTE's uplink (UL) efficiency critically depends on how the interference across
different cells is controlled. The unique characteristics of LTE's modulation
and UL resource assignment  poses considerable challenges in achieving this goal
because most LTE deployments have 1:1 frequency re-use, and the uplink
interference can vary considerably across successive time slots. In this
work, we propose LeAP, a measurement data-driven machine learning paradigm for
power control to manage uplink interference in LTE. The data driven approach has
the inherent advantage that the solution adapts based on network traffic,
propagation and network topology, that is increasingly heterogeneous with
multiple cell-overlays. LeAP system design consists of the following
components: {\em (i)} design of {\em user equipment} (UE) measurement statistics that
are succinct, yet expressive enough to capture the network dynamics, and {\em
(ii)} design of two learning based algorithms that use the reported measurements
to set the power control parameters and optimize the network performance. LeAP
is standards compliant and can be implemented in centralized SON ({\em self
organized networking}) server resource (cloud).  We perform extensive
evaluations using radio network plans from real LTE network operational in a
major metro~area in United States.
Our results show that, compared to existing approaches, LeAP provides
$4.9\times$ gain in the $20^{th}\%-\text{tile}$ of user data rate,
$3.25\times$ gain in median data rate.

\end{abstract}

\section{Introduction}
\label{sec:intro}


LTE uplink (UL) consists of a single carrier frequency division multiple access
(SC-FDMA) technique that orthogonalizes different users transmissions in the same
cell, by explicit assignments of groups of DFT-precoded orthogonal subcarriers. This
is fundamentally different from 3G, where 
users interfered with each other over the carrier bandwidth
and advanced receivers, such as successive interference cancellers, were
employed to suppress same cell interference. Same cell interference is mitigated by
design in LTE, however, other cell interference has a very different structure
compared to 3G. This calls for new power control techniques for setting user
transmit power and managing uplink interference.
In this paper, we design and evaluate LeAP, a new system that uses
measurement data to set power control parameters for optimal uplink interference
management in LTE.


\begin{figure}[t]
\begin{center}
\includegraphics[height=2.1in,width=2.6in]{./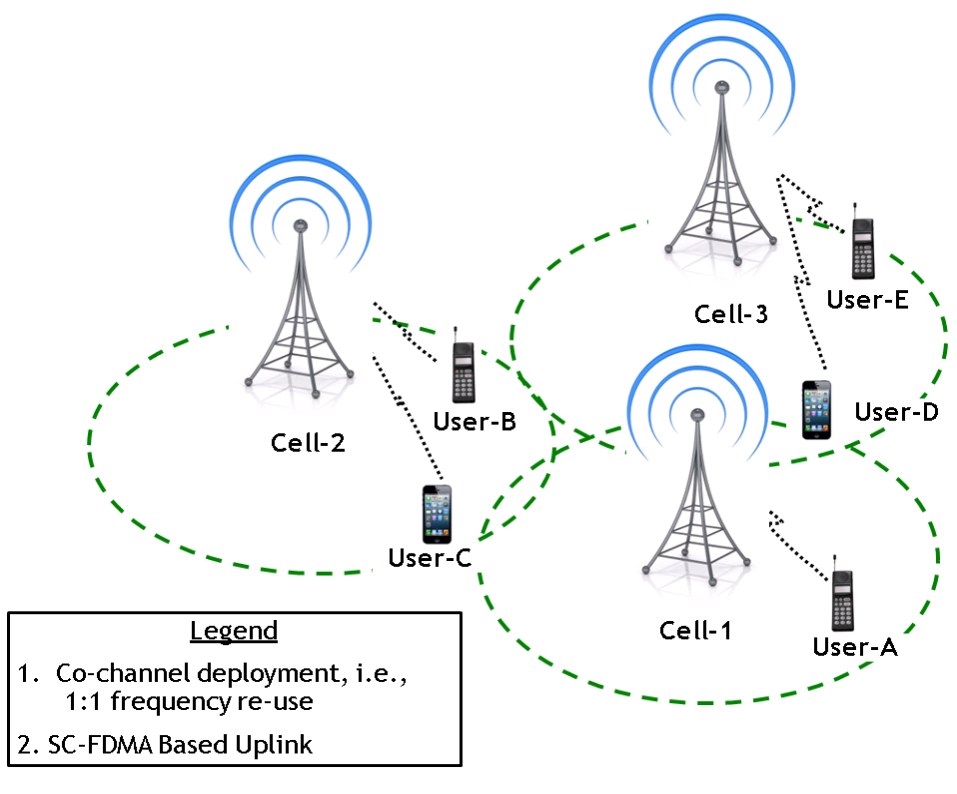}
\caption{\label{fig:toyeg}
{\small An illustration of three adjacent LTE cells. Depending on MAC time-frequency
assignment to users in Cell-2,~3, uplink transmission of User-A in Cell-1 could receive high interference from Cell-2
and low interference from Cell-3 or vice versa.}}
\end{center}
\end{figure}

\vspace{-0.1in}
\subsection{Uplink interference in 4G systems: Distinctive Properties}

Uplink interference in cellular systems is managed through careful power control
which has been a topic
of extensive research for more than two decades (see Section~\ref{sec:rw}). However,
uplink interference in LTE networks 
needs to be managed over multiple narrow bands (each corresponding to collection of
a {\em few} sub-carriers) over the entire bandwidth, thus giving rise to unique research  challenges.

To understand this better, we start by making two observations. Firstly, the uplink data rate of a
user in LTE depends on the SINR over the resource blocks (RBs)
\footnote{\scriptsize An RB is a block of
12 sub-carriers and 7 OFDM symbols and is the smallest allocatable resource in the
frequency-time domain.} assigned to the transmission.
Secondly, LTE uses 1:1 frequency re-use and thus an RB assigned to a user in a
cell can be used by another user in any of the neighboring cells too. The following example
illustrates how uplink interference is impacted due to the above observations.

{\bf Example:} Consider
the system in Figure~\ref{fig:toyeg} with three cells. Suppose the MAC of Cell-1 assigns,
to an associated User-A, RBs corresponding to time-frequency tuples
$(ts_1,fb_1), (ts_2,fb_2)$ over consecutive time slots $ts_1, ts_2$ respectively. Since the MAC of
Cell-2 and Cell-3 operate independently and use the same carrier frequency due to 1:1
re-use, the MAC
assignments of time-frequency tuples $(ts_i,fb_j),\ i,j=1,2$ at Cell-2 and Cell-3
could potentially be the following:
\begin{list}
{\labelitemi}
{\itemsep=0pt \parskip=0pt \parsep=0pt \topsep=0pt \leftmargin=0.2in}
\item Cell-2: 
$(ts_1,fb_1) \rightarrow \text{User-C}$ and $(ts_2,fb_2) \rightarrow \text{User-B}$
\item Cell-3: 
$(ts_1,fb_1) \rightarrow \text{User-E}$ and $(ts_2,fb_2) \rightarrow \text{User-D}$
\end{list}
As shown in the figure, since User-C and User-D are close to the edge of Cell-1,
in $ts_1$, User-A's transmission receives high interference from Cell-2
and low interference from Cell-3, whereas, in $ts_2$, User-A's transmission receives
high interference from Cell-3 and low interference from Cell-2. In any arbitrary
time-frequency resource,
interfering signal to User-A's uplink transmission at Cell-1, denoted by
$\text{Int}_A(LTE)$, can be
expressed as
\[ {\text{Int}}_A(LTE) = P_B\pmb{1}_{B}  + P_C\pmb{1}_{C} + P_D\pmb{1}_{D} +
P_E\pmb{1}_{E}\ ,  \]
where, $P_U, U\in\{B,C,D,E\}$ denotes the {\em received} power at Cell-1's base station due
to potential  uplink transmission of User-U in the {{\em same frequency block as
User-A}, and,  $\pmb{1}_U, U\in\{B,C,D,E\}$ is an $0\--1$ indicator variable denoting
whether User-U also transmits over the same time-frequency block as User-A.  Note
that, since User-B and User-C share the same cell, only one of them can be active in
a time-frequency resource and thus $\pmb{1}_B +\pmb{1}_C\leq 1$; similarly
$\pmb{1}_D+\pmb{1}_E\leq 1$. {Since MAC of each neighboring cell makes {\em independent scheduling
decision} on who gets scheduled in a time-frequency block,  the interference becomes highly
unpredictable from transmission to transmission; managing this unstable interference pattern
poses unique research challenges barely addressed in the literature. Indeed, this is unlike 3G
systems \footnote{\scriptsize In CDMA systems interference is summed over several
simultaneous user transmissions over the entire carrier bandwidth, thus leading to
more stable interference pattern.}, where the {\em neighboring cell
interference} for a similar topology
with CDMA technology (over an appropriate CDMA channel) would simply be
\[\text{Int}_A(CDMA)=P_B+P_C+P_D+P_E \] thus
leading to a more stable interference pattern across transmissions.  In general,
unlike LTE, the overall CDMA uplink interference also has an additional term for
{\em self cell interference} due to uplink users in the same cell.  Of course,
for desirable user performance, the uplink interference still has  to be managed
through power control algorithms that has been the focus of much of the existing
research on 3G power control and uplink interference management.

{\bf Solution requirements:} LTE networks are deployed with {\em self organized
networking} (SON) capabilities~\cite{ltesonsite1,ltesonsite2} to maximize network
performances.  Today's LTE networks are also
heterogeneous (HTNs) that include high power macro cells overlaying low power small
(pico/femto) cells.  Small cells are deployed in traffic hotspots or coverage
challenged areas, and thus, typically small cells outnumber macros by an order of
magnitude. This leads to a much larger, and hence more difficult to tune and manage,
cellular network where centralized SON servers are deployed to continuously optimize
the network~\cite{ltesonsite1}. Thus, a good solution to uplink interference
management should satisfy the following requirements: {\em (i)} it should be {\em
adaptive} to network traffic, propagation geometry and topology, {\em (ii)} it should
scale with the size of the network which could consist of tens of thousands of cells,
and {\em (iii)} it should be architecturally {\em compliant}  in the sense that it is
implementable in a SON server and adherent to standards. Note that, SON implementability dictates that the solution makes use of
the large amount of network measurement data. In this paper, we design a solution
that satisfy these requirements and achieves high network performance.

\vspace{-0.15in}
\subsection{Our Contributions}

In this paper, we propose LeAP, a learning based adaptive power control for uplink
interference management in LTE systems. We make the following contributions:

\begin{list}{\labelitemi}{\itemsep=0pt \parskip=0pt \parsep=0pt \topsep=0pt
\leftmargin=0.22in}

\item[1. ] {\em New framework for measurement data driven uplink interference
management:}
We propose a measurement data driven framework for setting power control parameters
for optimal uplink interference management
in an LTE network. Our framework {\em (i)} models the unique interference pattern in
OFDMA based LTE systems, {\em (ii)} accounts for varying traffic load and diverse propagation map in
different cells, {\em (iii)} and is implementable in a centralized SON architecture. 
See Section~\ref{sec:mod}\--\ref{sec:meas}.

\item[2. ] {\em Design of Measurement Statistics:} We derive
suitable measurement statistics  based on the processing of raw data from UE
measurement reports. Our measurement statistics are succinct yet expressive enough to
optimize the uplink performance by accounting for LTE's unique uplink interference
patterns along with network state and parameters. See Section~\ref{sec:meas} for
details.

\item[3. ] {\em Design of learning based algorithms:} Using the measurements derived
in Section~\ref{sec:meas}, we propose two learning based algorithms for optimal
setting of cell-specific power control parameters in  LTE networks. The two
algorithms trade-off complexity and performance: one provably converges to
the optimal, and the other is a fast heuristic that can be implemented using off-the-shelf solvers.  See
Section~\ref{sec:algospd}\--\ref{sec:algoreg}.

\item[4. ] {\em Extensive evaluation of LeAP benefits:} We evaluate our design using a
radio network plan from a real LTE network deployed in a major US metro. We
demonstrate the substantial gains for the evaluated network: the
edge data rate ($20^{th}\%-\text{tile}$ of data rate) improves to
$4.9\times$ whereas the median gain improves to $3.25\times$ compared to existing approaches.
The details are in Section~\ref{sec:eval}.

\end{list}



\vspace{-0.2in}
\section{Related Work}
\label{sec:rw}


\begin{sloppypar}
{\em Cellular Power Control:} Uplink power control in cellular systems has been an
active research area for around three decades. The pioneering works
in~\cite{yates-PC,foschini-PC} developed principles and iterative algorithms to
achieve target SINR when multiple users simultaneously transmit over a shared carrier.  This model is applicable to CDMA (3G) systems uplinks.
Since then, several authors have developed algorithms to jointly
optimize rate and transmit power in similar multi-user
systems~\cite{Saraydar01pricingand,Chiang04balancingsupply,xinzhou-PC,Xiao03autility-based}.
In particular, \cite{Saraydar01pricingand,xinzhou-PC,Xiao03autility-based} consider
utility based framework for joint optimization of rate and transmit power.  Uplink power control optimization was made tractable in CDMA setting in \cite{BocheS08,logconvSIR} where 
log-convexity of feasible SINR region was shown. We refer the reader
to \cite{book_powercontrol} for an excellent survey of the vast body of research in
power control. However, the uplink model in LTE is fundamentally different from these
systems for two reasons. First, unlike CDMA, uplink interference in LTE
is only from neighboring cells and the interference over an assigned {\em
frequency block} could change in every transmission, leading to a far more variable
interference pattern. Secondly, to control neighboring cell interference, the standards
have mandated cell-specific power control parameters that in turn govern UE
SINR-targets.
\end{sloppypar}

{\em Fractional Power Control (FPC) in LTE:} LTE power control is FPC based which
has led to some recent work~\cite{fpcericc,ibpclte,howtofpc,castfpc08}. However, unlike our work, none of
previous publications develop a framework and associated algorithms to optimize the FPC parameters based on
user path loss statistics and traffic load. Recognizing the
difficulties of setting FPC parameters in LTE,~\cite{sampatholpc} develops and
evaluates {\em closed loop} power control algorithms for dynamically adjusting SINR
targets so as to achieve a fixed or given interference target at every cell. This
work has two drawbacks: first, it is unclear how the interference targets could be
dynamically set based on network and traffic dynamics, and second, the scheme
does not maximize any ``network-wide" SON objective.

{\em Self Organized Networking (SON) in LTE:} Study of LTE SON algorithms 
have gained some attention recently mostly for downlink related
problems. \cite{twcRahmanY10,MadanBSBKJ10} study
the problem of downlink inter-cell interference coordination (ICIC)
for LTE, \cite{ffrStolyar08} optimizes downlink transmit
power profiles in different frequency carriers, and~\cite{ltedllb,ulblte,scsolte,slbhetnet}
study various forms (downlink, uplink, mobility based etc.) of traffic-load balancing with LTE SON.
To our best knowledge, ours is the first work to
develop measurement data-driven SON algorithms for uplink power control in LTE. For a
very extensive collection of material and presentations on the latest industry
developments in LTE SON, we refer the reader to~\cite{ltesonsite1,ltesonsite2}.



\vspace{-0.1in}
\section{A Primer on LTE Uplink and Fractional Power Control}
\label{sec:primer}

{\bf Terminologies used in the paper:}
A {\em resource block} (henceforth RB) is a block of 12~sub-carriers and 7 OFDM
symbols and is the smallest allocatable resource in the frequency-time domain.  {\em
eNodeB} (eNB in short) refers to an LTE base station and it hosts critical protocol
layers like PHY, MAC, and Radio Link Control etc. {\em User equipment}
(henceforth UE) refers to mobile terminal or user end device, and we also use UL for uplink.
Finally, {\em reference signal received power} (RSRP) is the average received power
of all downlink reference signals across the entire bandwidth
as measured by a UE. RSRP is a measure of downlink signal strength at UE.

\vspace{-0.1in}
\subsection{UL Transmission in LTE} 

LTE uplink uses SC-FDMA multiple access scheme. SC-FDMA reduces mobile's
peak-to-average power ratio by performing an $M$-point DFT precoding
of an otherwise OFDMA transmission. $M$ depends on the number of RBs assigned to the
UE. Also, each RB assigned to an UE is mapped to adjacent sub-carriers through
suitable frequency hopping mechanisms~\cite{3Gevol}.
At the base station receiver, to mitigate frequency selective fading, the per antenna signals
are combined in a frequency domain MMSE combiner/equalizer before the $M$-point
Inverse DFT (IDFT) and decoding stages are performed. The details of the above steps
are not relevant for our purpose; instead, we note  two key properties that will be
useful from an interference management perspective.
\begin{list}{\labelitemi}{\itemsep=0pt \parskip=0pt \parsep=0pt \topsep=0pt
\leftmargin=0.30in}
\item[{\bf P1:}]  The equalization of the received signal is performed
separately for each RB leading to {\em SINR being averaged only across
sub-carriers of a RB. Thus,  SINR in an RB  is a direct measure of UE
performance}\footnote{ The final UE data rate also depends
on the selected modulation and coding for the specific scheduled HARQ process.}.

\item[{\bf P2:}] Consider the adjacent subcarriers $S$ assigned  to an RB for a UE in a Cell-1.
In another neighboring Cell-2, the same subcarriers $S$ can be assigned to
at most 1~UE's RB because RB boundaries are aligned across subcarriers. 
Put simply, at a time there is at most 1 interfering UE per neighboring cell per RB.
\end{list}

\vspace{-0.1in}
\subsection{LTE Power Control and Challenges} 

To mitigate uplink interference from other cells and yet provide the UEs
with flexibility to make use of good channel conditions, LTE
standards have proposed that power control should happen at two time-scales as
follows:
\begin{list}{\labelitemi}{\itemsep=0pt \parskip=0pt \parsep=0pt \topsep=0pt
\leftmargin=0.22in}

\item[1. ] {\em Fractional Power Control (FPC-$\alpha$) at slow time-scale:} At the
{\em slower} time-scale (order of minutes), each cell sets
cell-specific parameters that the associated UEs use to set their average transmit
power and a target-SINR as a pre-defined function of local path loss measurements.
The two cell-specific parameters are nominal UE transmit power $P^{(0)}$ and fractional
path loss compensation factor $\alpha<1$.  A UE-$u$ with an average  path loss PL 
\footnote{\scriptsize $\text{Received Power at eNB}
 = \frac{\text{UE Transmit Power}}{\text{PL}}$}
to its serving cell transmits at a power spectral density (PSD),
\begin{equation}
P^{Tx}_{(u)}= P^{(0)}\cdot ({\text {PL}})^\alpha\ .
\label{eqn:fpcal}
\end{equation}
where PSD $P^{Tx}_{(u)}$ is expressed in Watt/RB. Thus, if a UE is assigned $M$
RBs in the uplink scheduling grant then its total transmit power is
$\min(MP^{Tx}_{(u)},P_{\text{tot}}^{\max})$ where $P_{\text{tot}}^{\max}$ is a cap on
the total transmission power of a UE. We remark that, $\alpha$ can be interpreted as
a {\em fairness} parameter that leads to higher SINR for UEs closer to eNB.

\item[2. ] At the {\em faster} time-scale, each UE is closed loop power controlled
around a mean transmit power PSD~(\ref{eqn:fpcal}) to ensure that a suitable average
SINR-target is achieved.  This closed loop
power control involves sending explicit power control adjustments, via the UL grants
transmitted in Physical Downlink Control Channel (PDCCH), that can be either
absolute or relative. 

\end{list}


{\bf Challenges:} LTE standards leave unspecified how each cell sets the value of
$P^{(0)}, \alpha$ and average cell-specific ``mean" interference targets that are useful
for setting UE SINR targets (see Remark~\ref{rem:sinrtarget},
Section~\ref{sec:meas}). Since the choice of parameters results in a suitable ``mean"
interference level at every cell above the noise floor, this problem of setting the
cell power control parameters is referred to as IoT control problem.  Clearly,
aggressive (conservative) parameter setting in a cell will improve (degrade) the
performance in that cell but will cause high (low) interference in neighboring cells.
Since FPC-$\alpha$ scheme and its parameters are cell-specific, its configuration
lies within the scope of {\em self-organized} (SON) framework, i.e., the
the solution should adapt the parameters based on suitable
periodic network measurements.  


\begin{remark}[Extensions]
The main idea behind FPC-$\alpha$ is to have cell dependent power control
parameters that change slowly over time. Our techniques
apply to any scheme where UE transmit power is a function of cell-specific
parameters and local UE measurements (path loss, downlink SINR etc.). 
\end{remark}

\vspace{-0.10in}
\section{System Model and Computation Architecture}
\label{sec:mod}

\begin{table}[t] \scriptsize
    \caption{List of Parameters and Variables used}
    \label{tab:params}
\begin{center}
\begin{tabular}{| c | c |}
    \hline
   {\bf Notation} & {\bf Description}  \\
\hline
\hline
    $\mC,\ c$     & Set of cell's and index for a \\
    $(c\in\mC)$   & typical cell, respectively \\
\hline
$P_c^{(0)}, \alpha_c$ & FPC-parameters of cell-$c$: nominal transmit power and \\
 & path loss compensation factor respectively. \\
\hline
$\pi_c$ & $\ln(P_c^{(0)})$, .i.e., nominal transmit power in log-scale \\
\hline
    $\mU_c$     & Set of UEs associated with cell-$c$ \\
\hline
    ${\mJ}_c $ & Set of cell's that interfere with cell-$c$ \\
\hline
    $l_c^{(u)}\ $ & Mean path loss of UE-$u$ to eNB of its serving cell-$c$ \\ 
\hline
    $L_c $ & Random variable for path loss of a\\ 
           & {\em random} UE $u\in \mU_e$ to its serving cell-$c$ \\
\hline
    $l_{e \rightarrow c}^{(u)}$ & Mean path loss of UE-$u\in \mU_e$ to cell-$c$'s eNB\\
\hline
    $L_{e \rightarrow c}$ & Random variable for path loss of a {\em random}\\ 
                          & UE-$u\in \mU_e$ to cell-$c$'s eNB \\
\hline
$\Lambda_c$, $\Lambda_{e \rightarrow c}$ & $\Lambda_c=\ln L_c$, $\Lambda_{e
\rightarrow c}=\ln L_{e\rightarrow c}$\\
\hline
$l_c(b),p_c(b)$ & For path loss histogram bin-$b$ of cell-$c$,\\
&  bin value and bin probability respectively\\
\hline 
$\gamma_c(b)$ & Expected SINR (in $\ln$-scale) of a cell-$c$ UE with path loss
$l_c(b)$.  \\
\hline
\end{tabular}
\vspace{-0.15in}
\end{center}
\end{table}

{\bf LTE HetNet (HTN) Model:} Our system model consists of a network of heterogeneous cells
all of which share the same carrier frequency , i.e., there is 1:1 frequency re-use in
the network. Some cells are high transmit power (typically 40~W) macro-cells while
others are low transmit power pico-cells (typically 1\--5~W). In HTNs, pico cells\footnote{\scriptsize Picocells as opposed
to femto cells have open subscriber group policies.}  are deployed by operators in
traffic hotspot locations or in locations with poor macro-coverage. The distinction
between macro and pico is not relevant for the development of our techniques and
algorithms but it is very important from an evaluation of our design. See
Section~\ref{sec:eval} for further discussions and insights. $\mC$ denotes
the set of all cells that includes macro and pico cells. We will introduce parameters
and variables as needed; Table~\ref{tab:params} lists the important ones.

{\bf Interfering Cell:} For a typical cell indexed by
$c\in \mC$, the set of other cells
that interfere with cell-$c$ are denoted by ${\mJ}_c$.
Ideally, a cell $e\in \mJ_c$, if the uplink
transmission of some UE associated with cell-$e$ is received at eNB of cell-$c$ with
received power above the noise floor. {\em Since interfering cells must be defined based
on the available measurements, we say $e\in\mJ_c$ if some UE-$u$ associated to
cell-$e$ reports downlink RSRP (i.e. RSRP is above a threshold of 140~dBm) from cell-$c$, 
which is an indication that UE-$u$'s uplink signal could interfere with
eNB of cell-$c$.} This is reasonable assuming uplink and downlink path-loss symmetry between
UEs and eNB.

\begin{figure}[t]
{\small
\begin{center}
\includegraphics[height=1.4in,width=2.85in]{./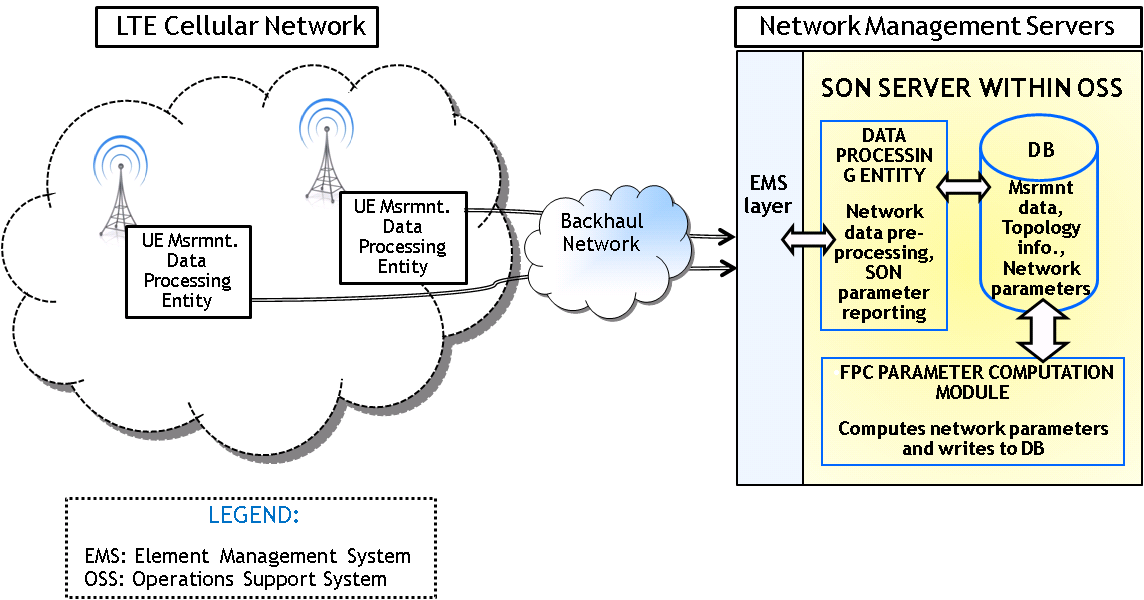}
\vspace{-0.1in}
\caption{\label{fig:comparch}
{\small Underlying computation architecture.}}
\end{center}
}
\vspace{-0.33in}
\end{figure}

{\bf Underlying Self-Optimized Networking (SON) architecture:} The underlying SON architecture
is shown in Figure~\ref{fig:comparch}. In this
architecture, network monitoring happens in a distributed manner across
the radio access network, and the heavy duty algorithmic computations happen
centrally at the Network Management Servers. 
As opposed to a fully distributed (across eNBs) computation approach,
this kind of
{\em hybrid} SON architecture is the preferred by
many operators for complex SON use-cases~\cite{ltesonsite1,ltesonsite2} for two
key reasons: capability to work across base stations from
different vendors as is typically the case, and not having to deal with convergence
issues of distributed schemes (due to asynchrony and message latency). We note two
relevant aspects of this architecture:

\begin{list}{\labelitemi}{\itemsep=0pt \parskip=0pt \parsep=0pt \topsep=0pt
\leftmargin=0.18in}

\item {\em Main building blocks:} The main building blocks are the following: {\em
(i)} monitoring
components at the cells that  collect network measurements,
appropriately process them to create Key Performance Indicators
(KPIs), and communicate the measurement and KPIs to the central server, and {\em (ii)}
the algorithmic computation engine (cloud servers) at a
central server that makes use of the KPIs and
compute the SON parameters that are then fed back to the network.

\item {\em Time-scale of computation:} The KPIs from the network are typically
communicated periodically. The period should be such that, it is short enough to
capture the changing network dynamics that call for network re-configuration and is
long enough for accurate estimation the relevant statistics. In general, most
networks have periodic measurement reports with frequency
$5\--15$~mins~\cite{ltehandbook} but
much faster {\em minimization of drive test} (MDT) data and UE trace information can be collected. 
In addition, measurement reports can be trigerred by events like
traffic load above a certain threshold beyond normal.

\end{list}

{\bf Design questions:} Given this architecture, to periodically compute the
FPC-$\alpha$ parameters to maximize the network performance in the uplink, we need to answer two
questions:

\begin{list}{\labelitemi}{\itemsep=0pt \parskip=0pt \parsep=0pt \topsep=0pt
\leftmargin=0.18in}
\item{\bf Q1:} What are the minimal set of network measurements required to configure FPC-$\alpha$
parameters? 

\item  {\bf Q2:} Based on these measurements, how should we choose $\alpha$,
$P^{(0)}$ and average interference threshold for every cell ? The algorithms should
be scalable and capable of updating the parameters as new measurement data arrives.

\end{list}

In Section~\ref{sec:meas}, we answer Q1 and propose a measurement based optimization framework , and in
Section~\ref{sec:algospd}\--\ref{sec:algoreg} , we develop learning based
algorithms to answer Q2.

\vspace{-0.12in}
\begin{sloppypar}
\section{Measurement and Optimization Framework}
\label{sec:meas}
\end{sloppypar}

\vspace{-0.05in}
\subsection{Network Measurement Data}

{\bf Additional notations:} We will derive an expression of SINR of a typical uplink
UE. Towards this goal, consider a network snapshot with a collection of UEs $\mU$; a
generic UE is indexed by $u\in \mU$.  Let $\mU_c$ denote the set of UEs
associated with cell-$c$. 
Denote by $l_c^{(u)}$ path loss from $u\in\mU_c$ to its
serving cell  $c$.  We also drop the superscript for UE and write $L_c$ to denote
the path loss of a {\em random} UE associated with cell-$c$.  We denote the {\em mean}
path loss from a UE $u\in \mU_e$ to an cell $c\in \mJ_{e}$ by $l_{e \rightarrow
c}^{(u)}$. We also drop
the superscript for UE and write $L_{e\rightarrow c}$ to denote the path loss from
cell-$e$ to cell-$c$ of a {\em random} UE belonging to cell-$e$. 
These notations are also shown in Table~\ref{tab:params}.

\begin{remark}[Fast-fading and frequency selectivity]
All path loss variables must be interpreted in ``time-average" sense so that the effect
of fast-fading is averaged out. Indeed, fast-fading happens at a much faster
time-scale (ms) than the FPC-$\alpha$ parameter computation time-scale of minutes.
Also, SC-FDMA equalization at the receiver  averages out the effect of
{\em frequency selectivity} in an RB.
\end{remark}

{\bf Expression for SINR:} 
Consider a UE-$u$ that transmits to its serving cell-$c$ over
the RBs assigned to it. Consider the transmission over any one
RB that is assigned subcarriers in the set $S$ by the MAC scheduler.
To derive the SINR over $S$, we wish to quantify the interference
experienced by the received signal at serving cell-$c$ over $S$. Due to our
observations on frequency hopping, the subcarriers within $S$ can be used in a
neighboring cell by at most one UE. Note that, depending on the load on a
cell, a RB may only be utilized only during a certain fraction of the
transmission subframes. With this motivation, we define the following binary
random variable:
\begin{equation}
O_{e\rightarrow c} \wedef \left\{
\begin{array}{lc}
1 & 
\begin{array}{c}
\text{\small if cell-$e$ schedules a UE over the }\\
\text{\small resource block $S$ also used at cell-$c$}
\end{array}\\
0 & \text{\small else}
\end{array}\ \right. 
\label{eqn:defoccupy}
\end{equation}
In the above, we assume that $O_{e\rightarrow c}$  does not depend on the
specific choice of $S$ and  has identical distribution for all $S$.
Denoting by $v$ the UE that occupies $S$ at cell-$e$,
ignoring fast-fading, 
the interfering signal at serving cell-$c$ of UE-$u$ for transmissions over
$S$ in cell-$e$ can be expressed as follows.
\begin{align*}
I_{e\rightarrow c} 
& = O_{e\rightarrow c} P^{Tx}_{(v)}\ .\  l_{e\rightarrow c}^{(v)}
= O_{e\rightarrow c} P^{(0)}_e (l_e^{(v)})^{\alpha_e} (l_{e\rightarrow c}^{(v)})^{-1} 
\end{align*}
It is instructive to note that, even without
fluctuations due to fast fading, the interference $I_{e\rightarrow c}$ is random because the interfering
UE-$v$ transmitting over $S$ in cell-$e$ could be any random UE in cell-$c$.
{\em In other words, the quantities $l_e^{(v)}, l_{e\rightarrow
c}^{(v)}$ can be viewed as a random sample from the joint distribution of
$(L_e,L_{e\rightarrow c})$.} Thus the total interference at cell-$c$ over frequency
block $S$ is a random variable $I_c$ given by

\begin{equation}
\label{eqn:totalint}
I_c = \sum_{e\in \mJ}
O_{e\rightarrow c} P^{(0)}_e (L_e)^{\alpha_e} (L_{e\rightarrow c})^{-1}
\end{equation}
where we drop the superscript of the loss variables to mean that the loss is from a
{\em random} UE in an interfering cell using the RB-$S$. This randomness is induced
by MAC scheduling of the interfering cell. Now, the SINR of UE-$u$ in cell-$c$ denoted
by $\sinr(l_c^{(u)})$ (as a function of UE-$u$'s path loss $l_c^{(u)}$) over
RB-$S$ can be expressed as

\begin{align}
& \sinr(l_c^{(u)}) = \frac{P^{Rx}_{(u)}}{I_c + N_0} \nonumber \\
& =\frac{P^{(0)}_{c} (l_c^{(u)})^{-(1-\alpha_c)}}
{\sum_{e\in \mJ_c}
O_{e\rightarrow c} P^{(0)}_e (L_e)^{\alpha} (L_{e\rightarrow c})^{-1}+N_0}
\label{eqn:sinrfb}
\end{align}

\begin{sloppypar}
{\bf Measurement variables:} It follows from~(\ref{eqn:sinrfb}) that, for any given
$l_c^{(u)}$, $\sinr(l_c^{(u)})$ is a random variable that is fully characterized by the
following distributions: joint distribution of $(L_e,L_{e\rightarrow c})$ and
$O_{e\rightarrow c}$ for all $e\in \mJ_c$. Furthermore, as we argue formally in
Section~\ref{sec:iotcprob},
{\em towards computing an ``average" network wide performance metric over the SON
computation period, we also require $\rho_c$ the mean uplink load in cell-$c$
and path loss distribution $L_c$.} The dependence on traffic load on the average network
performance is also quite intuitive. We thus have the following. 

\vspace{-0.05in}
\begin{center}
\fbox{\parbox[c]{0.95\linewidth}{
{\bf Design of Measurement Statistics:} 
For a given set of values of $P^{(0)}_c,\alpha_c$, under FPC-$\alpha$ mechanism, any
expected network wide performance metric can be fully characterized by the following
statistics: 
\begin{list}{\labelitemi}{\itemsep=0pt \parskip=0pt \parsep=0pt \topsep=0pt
\leftmargin=0.18in}

\item[1. ] 
Joint path loss 
distribution $(L_e,L_{e\rightarrow c})$ of uplink UEs for every interfering cell $e$ of $c$.

\item[2. ]
$\Pr(O_{e\rightarrow c}=1)$ which is the probability that cell-$e$ schedules a UE interfering 
with cell-$c$ for transmission over an RB.

\item[3. ]
Mean number of active  uplink UEs (uplink load) in every cell-$c$ denoted by $\rho_c$.

\item[4. ]
Path loss distribution of uplink UEs of cell-$c$ denoted by $L_c$.

\end{list}
}}
\end{center}
\end{sloppypar}


{\bf Histogram construction:} The required distributions can be
estimated using standard histogram inference techniques~\cite{BishopML} using the following
steps we state for completeness:
\vspace{0.05in}



\noindent{1. \em Collecting UE measurement samples:} 
Since UE measurement reports\footnote{The measurements are either available through
what is called {\em per call measurement data} or it can be seeked by
eNB\cite{TS36.331}.} contain reference signal strength (termed
RSRP) from multiple neighboring cells, the RSRP values can be coverted into path losses
using knowledge of cell transmit power. Thus, for a UE associated with cell-$c$, if a
measurement report contains RSRP from cell-$c$ and cell-$e$ both, then it 
provides samples for $L_c, $ and $(L_e,L_{e\rightarrow c})$.

\noindent {2. \em Binning the path loss data:} This is a standard step where
the range of $L_c$ and $(L_e,L_{e\rightarrow c})$ are divided into several
disjoint histogram bins and each data point is binned appropriately.
\vspace{0.05in}

\noindent {3. \em Estimating the occupancy probabilities $\Pr(O_{e\rightarrow c}=1)$}:
Assuming proportional-fair MAC scheduling\footnote{\scriptsize This is the most
prevalent MAC scheduling
in LTE.} where all UEs in a cell use the radio resources uniformly,
$\Pr(O_{e\rightarrow c}=1)$ can be estimated as the fraction of UEs in cell-$e$
that interfere with cell-$c$. This estimation can be performed by simply
computing the fraction of UEs in cell-$e$ that report measurements from cell-$c$.
\vspace{0.05in}

We make two observations. First, the histograms are best maintained in dB scale to
so that the range of path loss values is not too large.
Second, in practice, the data samples for constructing the
histograms can come for a large enough random subset of all the data.

\vspace{-0.1in}
\subsection{The IoT Control Problem in LTE HTNs}
\label{sec:iotcprob}

We are now formulate our problem based on the measurement histograms
described in the previous section.

{\bf Network wide performance metric:} Given our model, a suitable performance metric
should satisfy two properties: {(i)} it should account for the randomness in the
SINR's in a meaningful manner, and {(ii)} it should strike a balance between
aggregate cell throughputs and fairness.

We wish to propose a average performance metric where average is over all UE path
losses that can realize the measurement data.
Towards this end, we first define a performance metric for a typical
UE-$u$ in cell-$c$ who has path loss to its serving cell given by $l_c^{(u)}$.
Denote by $\gamma_c(l_c^{(u)})$ as the expected log-SINR of a typical UE-$u$ in
cell-$c$, i.e., $\gamma_c(l_c^{(u)})=\Ex[\ln \sinr(l_c^{(u)})]$ where the expression
of $\sinr(l_c^{(u)})$ is given by~(\ref{eqn:sinrfb}). We now define the
UE utility function $V(.)$ as follows:
\begin{equation}
V(\gamma_c(l_c^{(u)})) \wedef
\left\{
\begin{array}{c} 
\text{\small Utility of a UE-$u$ in cell-$c$ as a }\\
\text{\small function of $\gamma_c(l_c^{(u)})=\Ex[\ln \sinr(l_c^{(u)})]$}
\end{array}
\right\} 
\label{eqn:perUEutil}
\end{equation}
Here $V(.)$ is a concave increasing function and
we have explicitly shown the SINR depends on the path loss $l_c^{(u)}$.
Choosing the utility as a function of
$\Ex[\ln \sinr_u]$ has two benefits. Firstly, $\ln(\sinr_u)$ is a measure of the data rate with
$\sinr_u$. Secondly, converting the SINR into a log-scale makes the problem
tractable since it is well known that feasible power region is
log-convex~\cite{logconvSIR}\footnote{\scriptsize In simple terms, if all powers are represented
in log-scale then the feasible power vectors of all UEs is a convex region.}, and
thus, we can use elements of convex optimization theory.

Having defined a utility for a typical UE, we are now in a position to define
network wide performance metric obtained by averaging 
over all UE path losses that can realize the measurement data.
We need additional notations for the histogram of $L_c$'s.  
Suppose the path loss histogram of $L_c$ is
divided into $k$ disjoint intervals with mid-point of the intervals given by $l_c(1),
l_c(2),\hdots, l_c(k_c)$. Suppose in cell-$c$, the empirical probability\footnote{\scriptsize The
empirical probability of histogram bin-$b$ is simply the number of data items
binned into the bin-$b$ divided by the total number binned items for this
histogram.} of items in histogram bin-$b$ is given by $p_c(b)$. 
Then, we define the overall system utility as the {\em expected utility
of all UEs in all the cells} where the expectation is over empirical path loss
distribution given by the measurement data. This utility, as a function of vector of
$P^{(0)}_c,\alpha_c$'s (denoted by ${\mathbf P^{(0)}},\pmb{\alpha}$) is given by.
{\allowdisplaybreaks
{\small
\begin{align}
&\text{Util}({\mathbf P^{(0)}},\pmb{\alpha}) 
=\sum_{c\in \mC} \Ex[\text{Total utility of UEs in cell-$c$}]
\nonumber \\
&=\sum_{c\in \mC} \sum_{b=1}^{k_c}\Ex[\text{Total utility of UE with path loss
$l_c(b)$ in cell-$c$}]
\nonumber \\
&=\sum_{c\in \mC} \sum_{b=1}^{k_c}\Ex[\text{{\small Num. UEs with path loss 
$l_c(b)$}}] \times V(\gamma_c(l_c(b)))
\nonumber \\
& = \sum_{c\in \mC} \sum_{b=1}^{k_c}\rho_c\  p_c(b)\ V\left(\gamma_c(l_c(b))\right)
\label{eqn:sysutil}
\end{align}
}
}
where $\gamma_c(l_c(b))$ is the expected SINR in log-scale for any UE with path
loss $l_c(b)$ to its serving eNB of cell-$c$. The last step follows because the expected
number of UEs with path loss $l_c(b)$ is product of expected number of UEs in
cell-$c$ and the probability of a UE with path loss $l_c(b)$.

{\bf Choice of UE-utility:} Though our techniques work for a generic concave and
increasing $V(.)$, for our design and evaluation, we 
use the following form of $V(.)$ in the rest of paper:
\begin{equation}
\label{eqn:samutil}
V(\gamma) = \ln \ln(1+\exp(\gamma))\ .
\end{equation}
\begin{sloppypar}
One can verify that the function $V(.)$ defined above is concave.  
Roughly speaking, $\ln(1+\exp(\gamma))$ is the Shannon data rate corresponding to
log-scale (natural log) SINR of $\gamma$. It is well known that utility
defined by log of data rate strikes the right balance between {\em fairness} and
overall system performance.

\end{sloppypar}

\begin{remark}[Data rate and throughput]
In this paper, we use UE data rate or PHY data rate (in bits/sec/Hz) instead of UE
throughputs (bits/sec/Hz). Under proportional-fair MAC scheduling, the dominant scheduling policy in LTE,
UE throughput $R_u$ of UE-$u$ in cell-$c$, is roughly related to PHY data rate $r_u$
of UE-$u$ by $R_u=r_u/N_c$; here $N_c$ is the average number of simultaneously active users in
cell-$c$. Thus, $\ln R_u=\ln r_u - \ln N_c$. In other words, the
log of throughput and log of data rate are off by an additive constant independent of
power control parameters. 
\end{remark}

We also observe that optimizing the total log-data rate of all
UEs is aligned with the MAC objective of proportional-fair scheduling in LTE.

\begin{sloppypar}
{\bf Measurement based IoT Control Problem (IoTC):} We now formulate the problem of optimally
configuring the parameters of the FPC-$\alpha$. This problem is
commonly referred to as the Interference over Thermal (IoT) control problem
because the optimal choice of $P^{(0)}_c,\alpha_c$'s result in a suitable interference
threshold above the noise floor in every cell. The problem can be succinctly stated
as follows: 
\begin{quote}
{\em Given the path loss histograms of $L_c, (L_e,L_{e\rightarrow c})$,
empirical distribution $O_{e\rightarrow c}$, and average traffic load $\rho_c$ at every cell-$c$, 
find optimal $P^{(0)}_c,\alpha_c$ for every cell-$c$ such that we maximize $\text{Util}({\mathbf
P^{(0)}},\pmb{\alpha}) $ given by~(\ref{eqn:sysutil}).}
\end{quote}
\end{sloppypar}

At a first glance, the above problem may look complex due to the fact that
$\text{Util}({\mathbf P^{(0)}},\pmb{\alpha}) $ is a function of several random
variables.  We will first formulate this problem as a mathematical program so that
the problem becomes tractable. By using motivation from~\cite{logconvSIR}\footnote{\scriptsize
\cite{logconvSIR} shows that feasible power region is log-convex 
in many wireless networks. This means that, if powers are in log-domain, then
the problem has convex feasible region which makes it amenable to convex optimization
theory.}, we convert all powers, and path losses to logarithmic scale as
follows.

{\allowdisplaybreaks {\small \begin{align} & \theta_c=\Ex\left[\ln \left( {\sum_{e\in
\mJ_c} O_{e\rightarrow c} P^{(0)}_e (L_e)^{\alpha} (L_{e\rightarrow c})^{-1}+N_0}
\right)\right] \label{eqn:logtran1} \\ & \pi_c = \ln P^{(0)}_c  \  , \ \lambda_{c}(b)
= \ln l_c(b)  \ , \ \lambda_{ec}(b) = \ln l_{ec}(b)  \ , \label{eqn:logtran2} \\ 
& \ \Lambda_c = \ln L_c  \ , \ \Lambda_{e\rightarrow c} = \ln L_{e\rightarrow c}
\label{eqn:logtran3} \end{align} } } 

\begin{sloppypar}
The notations are also shown in Table~\ref{tab:params}.
We can now re-write $\gamma_c(l_c(b))$ (given by~(\ref{eqn:sinrfb})) in terms
of these above variables as
\begin{align}
& \gamma_c(l_c(b))  = \pi_c-(1-\alpha_c)\lambda_c(b) - \theta_c
\label{eqn:logtran4}\\
&\text{where,}\nonumber \\
&\  \theta_c =\Ex\left[\ln 
\left(
\sum_{e\in \mJ_c}
O_{e\rightarrow c} e^{(\pi_e +\alpha_e \Lambda_e - \Lambda_{e\rightarrow
c})} + N_0 
\right) \right]
\label{eqn:logtran5}
\end{align}
which has the standard look of SINR expression in log-scale. Thus, we can state the
problem of maximizing $\text{Util}({\mathbf
P^{(0)}},\pmb{\alpha}) $ as the problem of maximizing $$\sum_c \sum_b\rho_cp_c(b)
V(\gamma_c(l_c(b)))$$
subject to equality constraints given by~(\ref{eqn:logtran4})~and~(\ref{eqn:logtran5}).
It turns out that the equality constraints can be re-written as inequalities to
convert this into convex non-linear program (NLP) with inequality constraints.
This leads to the following proposition
where we have also imposed an upper bound on the maximum transmit power per RB and
maximum average interference cap.
\end{sloppypar}

\begin{proposition}
\begin{sloppypar}
Given measurement statistics of $L_c, (L_c,L_{e\rightarrow c}), O_c$ and average
traffic load $\rho_c$ at every cell-$c$, the problem of maximizing
$\text{Util}({\mathbf P^{(0)}},\pmb{\alpha}) $ given
by~(\ref{eqn:sysutil}), subject to maximum transmit power constraint on RB, 
is equivalent to the following {\em convex} program:
\end{sloppypar}
\begin{center}
\fbox{\parbox[c]{0.98\linewidth}{
\vspace{-0.1in}
{\small
\begin{align}
& \text{{\bf IoTC-SCP:}}  \nonumber \\
& \max_{\{\pi_c\},\{\alpha_c\},\{\gamma_c(b)\}} \ \  \sum_{c,b}\rho_cp_c(b)
V(\gamma_{c}(b))
\nonumber\\
& \text{subject to,} \nonumber \\
& \ \forall\ c\in \mC,b\in[1,k_c] :\ \gamma_{c}(b)  \leq \pi_c - (1-\alpha_c) \lambda_c(b) - \theta_c 
\label{eqn:con1}\\
& \ \forall\ c\in \mC:\  \theta_c  \geq \Ex\left[\ln 
\left(
\sum_{e\in \mJ_c}
O_{e\rightarrow c} e^{(\pi_e +\alpha_e \Lambda_e - \Lambda_{e\rightarrow
c})} + N_0 
\right) \right]
\label{eqn:con2} \\ 
&\ \forall\ c\in \mC,b\in[1,k_c] :\ \pi_c + \alpha_c \lambda_c(b)\leq \ln P_{max}
\label{eqn:con3}\\
&\ \alpha_c \in [0,1],\ \gamma_c(b)\in [\gamma_{min},\infty),\ \ \theta_c\in [\ln
N_0,\ln I_{max}]
\label{eqn:con4}
\end{align}
}
\vspace{-0.2in}
}}
\end{center}
\end{proposition}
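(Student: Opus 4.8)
The plan is to prove the proposition in two stages: first show that passing from the equality constraints \eqref{eqn:logtran4}--\eqref{eqn:logtran5} to the inequalities \eqref{eqn:con1}--\eqref{eqn:con2} leaves the optimal value unchanged, and then verify that the resulting program IoTC-SCP is convex. The log-domain change of variables \eqref{eqn:logtran1}--\eqref{eqn:logtran3} is taken as given.

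For the equivalence, I would treat $\gamma_c(b)$ and $\theta_c$ as auxiliary decision variables and exploit two monotonicity facts. Since $V$ is increasing and each weight $\rho_c p_c(b)$ is nonnegative, the objective is nondecreasing in every $\gamma_c(b)$; hence at any optimum we may push each $\gamma_c(b)$ up until \eqref{eqn:con1} binds, so \eqref{eqn:con1} holds with equality and recovers \eqref{eqn:logtran4}. For $\theta_c$ the crucial structural observation is a decoupling: $\theta_c$ appears only on the left-hand side of its own constraint \eqref{eqn:con2} and, with a negative sign, inside \eqref{eqn:con1} for the same cell; it never enters the interference expression on the right-hand side of \eqref{eqn:con2} for any cell, which depends only on $\{\pi_e,\alpha_e : e\in\mJ_c\}$. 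Consequently lowering $\theta_c$ relaxes \eqref{eqn:con1} (permitting larger $\gamma_c(b)$) without affecting the feasibility of any other constraint, so at an optimum each $\theta_c$ is driven down to its lower bound \eqref{eqn:con2}, which then holds with equality and recovers \eqref{eqn:logtran5}. Thus every optimizer of IoTC-SCP satisfies the original equalities with the same objective value, while conversely any point feasible for the equality-constrained problem satisfies \eqref{eqn:con1}--\eqref{eqn:con2} as equalities and is therefore feasible for IoTC-SCP; the two optima coincide.

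For convexity I would check each ingredient. The objective $\sum_{c,b}\rho_c p_c(b) V(\gamma_c(b))$ is a nonnegative combination of the concave maps $\gamma_c(b)\mapsto V(\gamma_c(b))$, hence concave. Constraints \eqref{eqn:con1} and \eqref{eqn:con3} are affine in $(\pi_c,\alpha_c,\theta_c,\gamma_c(b))$ once one notes that $(1-\alpha_c)\lambda_c(b)$ is affine in $\alpha_c$, and \eqref{eqn:con4} is a box. The only nontrivial term is the right-hand side of \eqref{eqn:con2}: for each fixed realization of $(\Lambda_e,\Lambda_{e\rightarrow c},O_{e\rightarrow c})_{e\in\mJ_c}$ the exponents $\pi_e+\alpha_e\Lambda_e-\Lambda_{e\rightarrow c}$ are affine in the decision variables and the constant may be written as $e^{\ln N_0}$, so the inner expression is a log-sum-exp of affine functions, which is convex; taking the expectation (a nonnegative average over realizations) preserves convexity. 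Since \eqref{eqn:con2} lower-bounds $\theta_c$ by a convex function, its feasible set is convex, and the whole program is a concave maximization over a convex set.

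The main obstacle is the equivalence step rather than the convexity check: one must argue that the equalities can be relaxed without loss of optimality, and the cleanest route is precisely the decoupling observation that $\theta_c$ never appears in any cell's interference term, so that each $\theta_c$ and each $\gamma_c(b)$ can be optimized coordinatewise to make its governing inequality tight. I would isolate this as the lemma driving the reduction; the convexity verification then follows routinely from the log-sum-exp structure inherited from the log-domain transformation.
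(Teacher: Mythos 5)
Your proposal is correct and follows essentially the same route as the paper's (very terse) proof outline: the paper likewise argues that at optimality all inequalities must bind and that the constraint set is convex because constraint~(\ref{eqn:con2}) involves a log-sum-exponential function. Your write-up simply supplies the details the paper omits---in particular the monotonicity/decoupling argument (that $\theta_c$ never appears in any cell's interference term and $\gamma_c(b)$ only in its own constraint and the objective) justifying the phrase ``from basic principles,'' and the observation that expectation preserves the convexity of the LSE term.
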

\begin{proof}
(Outline) First, at optimality, all of
the inequalities must be met with equality from basic principles. Secondly, the
constraint set is a convex region since the constraints~(\ref{eqn:con2}) is convex 
due to convexity of LSE function\footnote{\scriptsize LSE or
log-sum-exponential functions are functions of the form $f(\{x_i\}_i)=\ln(\sum_i
e^{\sum x_{ij}a_j})$.}~\cite{ghaconv}.
\end{proof}

We make two important remarks.
\begin{remark}[IoTC-SCP constraints]
In IoTC-SCP, constraint~(\ref{eqn:con1}) states that the expected SINR in log-scale
can be no more than the receiver UE power less the expected uplink interference at
the eNB, constraint~(\ref{eqn:con2}) states that the expected interference at each
cell has to be at least the expected sum of interferences from the neighboring cells,
and constraint~(\ref{eqn:con3}) limits the maximum power allowed per RB.
Finally,~~(\ref{eqn:con4}) states the valid domain of the
variables $\alpha_c,\theta_c$'s and $\gamma_u$'s, where, $\gamma_{min}$ is the minimum
decodable SINR. Note that the valid domain of $\theta_c$ is $[\ln N_0,\ln I_{max}]$
because maximum average interference in cells could need a cap due to
hardware design constraints. 

\end{remark}
\begin{remark}[Setting SINR target of UEs]
\label{rem:sinrtarget}
In IoT-SCP, suppose $\pi_c^*,\alpha_c^*,\theta_c^*$ represent the optimal values
for cell-$c$ and let $P_c^{(0)*}=\exp(\pi_c^*), I_c^*=\exp(\theta_c^*)$ be the optimal
values in linear scale. Then, the SINR target of an arbitrary UE in cell-$c$ with
average path loss $l_c^{(u)}$ is set as
{\small
\begin{equation}
\label{eqn:sinrtarget}
\sinr_{\text{target}}(l_c^{(u)})
=\frac{\min\left[P_{\max}\ (l_c^{(u)})^{-1},\ P_c^{(0)*}\ 
(l_c^{(u)})^{-(1-\alpha^*_c)}\right]}{I_c^*}\ .
\end{equation}
}
\end{remark}

\begin{sloppypar}
{\bf Challenges in solving IoTC-SCP:} IoTC-SCP is
somewhat different from
a traditional non-linear program that are solved using standard gradient
based approaches (using Lagrangian).
The main challenge in solving IoTC-SCP comes from the randomness in the
interferers.  As a consequence of this, the expected value of r.h.s. of interference
constraint~(\ref{eqn:con2}) is computationally-intensive even for given
values of the $\pi_c$'s and $\alpha_c$'s. Indeed, this requires accounting for
$O(B_{max}^K)$ number of interferer combination per cell where $B_{max}$ is the maximum
number of histogram-bins of the joint distribution of $(L_e,L_{e \rightarrow c})$
in any cell and $K$ is the maximum number of interferes of an cell.
Repeating this computation for every iteration of a gradient based
algorithm is next to impossible.

By adapting elements of learning theory, we
propose two algorithms that strike a different performance-computation trade-off.
One is optimal requiring more computational resources and the
other is an approximate heuristic with fast computation times.
\end{sloppypar}

\vspace{-0.18in}
\begin{sloppypar}
\section{A Stochastic Learning Based Algorithm}
\label{sec:algospd}
\end{sloppypar}

We now develop a stochastic-learning based gradient algorithm. 

We introduce additional notations for ease of exposition. Denote the original
optimization variables $(\gamma_c(b),\pi_c,\alpha_c)$'s by the vector ${\mathbf z}$
(also called {\em primal} variables). We also denote the constraint set of the
problem IoTC-SCP by the random function
$h({\mathbf z},{\boldsymbol O, \Lambda})$, i.e.
\begin{align}
&h({\mathbf z},{\boldsymbol O, \Lambda})\nonumber \\
&=\left[
\begin{array}{c}
\left[\gamma_c(b)  - \pi_c - (1-\alpha_c)
\lambda_c(b)\right]_{c \in \mC, b\in [1,k_c]} \\
\left[\ln \left( \sum_{e\in \mJ_c}
O_{e\rightarrow c} e^{(\pi_e +\alpha_e \Lambda_e - \Lambda_{e\rightarrow
c})} + N_0 \right)  - \theta_c\right]_{c\in\mC} \\
\left[ \pi_c + \alpha_c
\lambda_c(b) - \ln P_{max}\right]_{c\in \mC,b\in [1,k_c]}
\end{array}
\right] 
\label{eqn:compcons}
\end{align}
Thus, the IoTC-SCP problem can be simply stated as
\[\max_{{\mathbf z}}\sum_{c,b} V(\gamma_c(b))\ \ 
\text{s.t.}\ \ \Ex[h({\mathbf z},{\boldsymbol O, \Lambda})]\leq 0\ .\]
In the above, the expectation is with respect to the random variables $O_{e\rightarrow c}$'s and
$(\Lambda_e,\Lambda_{e\rightarrow c})$'s.
We define the random Lagrangian function as follows:
\begin{align}
\label{eqn:lagra}
\mL({\boldsymbol z}, h({\mathbf z},{\boldsymbol O}, {\boldsymbol \Lambda}),{\mathbf p} )
=\sum_u V(\gamma_u) -  {\mathbf p}^t h({\mathbf z},{\mathbf O}, {\boldsymbol \Lambda})
\end{align}
where, ${\mathbf p}\geq 0$ denotes the so called Lagrange multiplier vector and it
has the same dimension as the number of constraints.  Since the objective function of
IoTC-SCP is concave and the constraint set is convex, one can readily show from
convex optimization theory that~\cite{convopt}
\begin{align}
\label{eqn:sp}
\text{OPT} &= 
\max_{{\mathbf z}} \min_{{\mathbf p}\geq 0} 
\Ex[\mL({\boldsymbol z}, h({\mathbf z},{\boldsymbol O, \Lambda}),{\mathbf p}
)]\\
&=\min_{{\mathbf p}\geq 0} \max_{{\mathbf z}}
\Ex[\mL({\boldsymbol z}, h({\mathbf z},{\boldsymbol O, \Lambda}),{\mathbf p} )]\ ,
\nonumber
\end{align}
where OPT denotes the optimal value of IoTC-SCP. Our goal is to solve the IoTC-SCP
problem by tackling the saddle point problem~(\ref{eqn:sp}).

{\bf Useful bounds on the optimization variables:} Before we describe our  algorithm,
we derive trivial but useful bounds on feasible primal variables. The iterative
scheme to be described shortly projects the updates primal variables within these
bounds. The proof is straightforward and skipped for want of space.

\begin{lemma}
\label{LEM:PUB}
Any feasible $\gamma_c(b)$ and $\pi_c$ of the problem IoTC-SCP satisfies 
\[\pi_{\min} \leq \pi_c \leq  \pi_{\max},\ \ 
\gamma_{\min} \leq \gamma_c(b) \leq \gamma_{\max}\ , \]
where,
{\small \[\pi_{\max} = \ln P_{\max},\ \pi_{\min} = \gamma_{\min} + N_0,\ 
\gamma_{\max}= \ln P_{\max} - N_0\ .\]}
\end{lemma}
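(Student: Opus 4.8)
The plan is to read off each of the four bounds directly from the constraints of IoTC-SCP, chaining the derivations in an order so that an already-established bound can feed into the next. Two structural facts underlie everything: every log-scale path loss is non-negative, $\lambda_c(b)=\ln l_c(b)\ge 0$, because physical path loss is at least unity (received power never exceeds transmit power, per the definition of PL), and $\alpha_c\in[0,1]$ by \eqref{eqn:con4}. Consequently both $\alpha_c\lambda_c(b)$ and $(1-\alpha_c)\lambda_c(b)$ are non-negative, which is exactly what lets me discard the path-loss compensation terms in the favorable direction at each step.

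First, the lower bound $\gamma_{\min}\le\gamma_c(b)$ is immediate, since it is simply part of the variable domain stated in \eqref{eqn:con4}. Second, for the upper bound on $\pi_c$ I would use the per-RB power cap \eqref{eqn:con3}, namely $\pi_c+\alpha_c\lambda_c(b)\le\ln P_{\max}$; dropping the non-negative term $\alpha_c\lambda_c(b)$ only weakens the inequality and gives $\pi_c\le\ln P_{\max}=\pi_{\max}$. Third, the upper bound on $\gamma_c(b)$ follows from the SINR constraint \eqref{eqn:con1}, $\gamma_c(b)\le\pi_c-(1-\alpha_c)\lambda_c(b)-\theta_c$, into which I substitute the just-derived $\pi_c\le\ln P_{\max}$, drop the non-negative term $(1-\alpha_c)\lambda_c(b)$, and use the domain bound $\theta_c\ge\ln N_0$ from \eqref{eqn:con4}, yielding $\gamma_c(b)\le\ln P_{\max}-\ln N_0=\gamma_{\max}$. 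Finally, rearranging that same constraint \eqref{eqn:con1} as $\pi_c\ge\gamma_c(b)+(1-\alpha_c)\lambda_c(b)+\theta_c$ and bounding below by $\gamma_c(b)\ge\gamma_{\min}$, $(1-\alpha_c)\lambda_c(b)\ge 0$, and $\theta_c\ge\ln N_0$ produces $\pi_c\ge\gamma_{\min}+\ln N_0=\pi_{\min}$.

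There is essentially no analytic obstacle here — the statement is a bookkeeping exercise over linear constraints, which is why the authors skip it. The only point that genuinely requires care is the sign fact $\lambda_c(b)\ge 0$, which rests on the physical normalization of path loss rather than on the program itself; once that is in hand and the chaining order is respected (the upper bound on $\pi_c$ must be derived before the upper bound on $\gamma_c(b)$), each bound reduces to discarding non-negative terms. I would also flag that the bare $N_0$ appearing in the stated $\pi_{\min}$ and $\gamma_{\max}$ should be read as $\ln N_0$, so as to be consistent with the domain constraint $\theta_c\in[\ln N_0,\ln I_{max}]$ in \eqref{eqn:con4}.
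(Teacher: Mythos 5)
Your proposal is correct and follows essentially the same route as the paper's (sketched) proof: combine the SINR constraint~(\ref{eqn:con1}) with the domain bounds $\gamma_c(b)\geq\gamma_{\min}$, $\theta_c\geq\ln N_0$, and the power cap~(\ref{eqn:con3}), discarding the non-negative terms $\alpha_c\lambda_c(b)$ and $(1-\alpha_c)\lambda_c(b)$ justified by path losses being positive in log-scale. Your flag about reading the bare $N_0$ in the lemma's statement as $\ln N_0$ is also well taken, since the paper's own sketch writes $\theta_c\geq N_0$ in conflict with its domain constraint $\theta_c\in[\ln N_0,\ln I_{max}]$.
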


{\bf Intuition and a stochastic iterative algorithm:} 
The equivalent problem~(\ref{eqn:sp}) is referred to
as the {\em saddle point} problem and one could use standard {\em primal-dual}
techniques to solve this problem~\cite{asusaddlepoint}. In such an approach, 
the primal variables
and dual variables are updated alternatively in an iterative fashion, and each update
uses a gradient (of  $\Ex[\mL(.)]$) ascent for the primal variables and gradient
descent for the dual variables. However,
computing the expectation and its gradient is computationally expensive (can be
$O(B_{max}^K)$ where $B_{max}$ is maximum histogram bins and $K$ is maximum number
of interferers) and this has to be done in every iteration which is practically
impossible. Instead,
we perform primal-dual iterations by substituting the random variables with {\em random
samples} of the random variables in every iteration: thus expectation computation is
replaced by a
procedure to draw a random sample from the histograms which is a computationally
light procedure. The idea of replacing a random variable by a {\em sample} in an iterative
scheme is not new, the entire field of stochastic approximation theory deals with
such schemes and conditions for this to work~\cite{borkarstochapp}. Our contribution
is in applying this powerful technique to the IoTC-SCP problem. By using elements of
stochastic approximation theory, we also show that the resulting output converges to optimal. 

We now describe  our algorithm.
Denote the value of the primal variables in iteration-$n$ by
${\mathbf z}_n$ and the value of the dual variables in iteration-$n$ by ${\mathbf
p}_n$. We also denote by $\hat{{\mathbf z}}_n$ and $\hat{{\mathbf p}}_n$ the
iteration average
of the primal and dual variables respectively.
The overall algorithm is summarized in Algorithm~\ref{algo:iotspd}
and is described as follows:


\begin{sloppypar}
\noindent {\bf Initialization:} First ${\mathbf z}_n,{\mathbf p}_n,\hat{{\mathbf
z}}_n,\hat{{\mathbf p}}_n$ are initialized to positive values within bounds specified
by Lemma~\ref{LEM:PUB}.

\noindent {\bf Iterative steps:} Denote by $a_n=1/n^{\zeta},0.5<\zeta\leq 1$ as the
step size in iteration-$n$.  The following steps are repeated for $n=0,1,2,\hdots$:

\begin{list}{\labelitemi}{\itemsep=0pt \parskip=0pt \parsep=0pt \topsep=0pt
\leftmargin=0.22in}

\item[1.] {\em Random Sampling of Interferers:} This step is performed for each
interfering cell pair $(c,e)$. For each cell-$e\in\mJ_c$ the following steps are
performed:
\begin{list}{\labelitemi}{\itemsep=0pt \parskip=0pt \parsep=0pt \topsep=0pt
\leftmargin=0.16in}

\item Toss a coin
with probability of {\em head} $\Pr(O_{e\rightarrow c}=1)$.  Denote the outcome of
this coin-toss by the $0\--1$ variable $\chi_{ec}$ where $\chi_{ec}$ takes value one if
there is a head.  

\item If $\chi_{ec}=1$, then draw a random sample from the joint distribution of
$(L_e, L_{e\rightarrow c})$ based on the histogram of this joint distribution.
Denote the random sample, which is a 2-tuple by $(sa[1],sa[2])$. 
Let $(\xi_{ec}[1],\xi_{ec}[2])$ denote the logarithm of this random sample, i.e.,
$\xi_{ec}[i]=\ln(sa[i]),i=1,2\ .$

\item In the function $h(.)$ given by~(\ref{eqn:compcons}) representing the
constraints of IoTC-SCP, replace the random variables
$(\Lambda_e,\Lambda_{e\rightarrow c})$ by the random sample
$(\xi_ec[1],\xi_{ec}[2])$.

\end{list}
Denote by ${\boldsymbol \chi}, {\boldsymbol \xi}$ as the vector of
$\chi_{ec},\xi_{ec}$'s samples. We write $h({\mathbf z}_n, 
{\boldsymbol \chi}, {\boldsymbol \xi}),{\mathbf p}_n)$ as a function of these random
samples along with the primal and dual variables in iteration-$n$.

\item[2.] {\em Primal Update:} The primal variables are updated as
\begin{equation}
\label{eqn:pud}
{\mathbf z}_{n+1} = {\mathbf z}_n + a_n 
\nabla_{\bz}\mL({\mathbf  z}_n, h({\mathbf z}_n, 
{\boldsymbol \chi}, {\boldsymbol \xi}),{\mathbf p}_n )
\end{equation}
where $\nabla_{\bz}\mL(.)$ denotes the partial derivative of $\mL(.)$ with respect
to ${\mathbf z}$. 

$\ \ $Next project updated values of $\alpha_e,\pi_e,\theta_e,\gamma_u$'s respectively within
intervals $[0,1]$, $[\pi_{\min},\pi_{\max}]$, $[\ln N_0,\ln I_{max}]$,$[\gamma_{\min},\gamma_{\max}]$. For
example, if the value of $\alpha_e$ in iteration-$(n+1)$, $({\alpha_e})_{n+1}>1$,
then set $({\alpha_e})_{n+1}\leftarrow 1$; and, if any
$({\alpha_e})_{n+1} <0$ set $({\alpha_e})_{n+1}\leftarrow0$. 

\item[3.] {\em Dual update:} The dual variables are updated as
\begin{equation}
\label{eqn:dud}
{\mathbf p}_{n+1} = \left[ {\mathbf p}_n + 
a_n h({\mathbf z}_n, {\boldsymbol \chi},{\boldsymbol \xi})\right]^+\ .
\end{equation}

\item[4.] {\em Updating average iterates:} The current value of the solution is given by
average iterates
\begin{align*}
\hat{{\mathbf z}}_{n+1} & \leftarrow \smfrac{1}{n}{\mathbf z}_n + (1-\smfrac{1}{n})\hat{{\mathbf z}}_n\\
\hat{{\mathbf p}}_{n+1} & \leftarrow \smfrac{1}{n}{\mathbf p}_n + (1-\smfrac{1}{n})\hat{{\mathbf p}}_n\ .
\end{align*}

\end{list}

The iterative steps are repeated for $N_{iter}$ iterations based on whether
$\hat{{\mathbf z}}_{N_{iter}}, \hat{{\mathbf p}}_{N_{iter}}$ converge within a desirable
precision. There are many techniques are testing the convergence of stochastic
iterations~\cite{stop-simul} that can be readily used for our purpose. The step-sizes
$a_n$ can be set as $1/n^{\zeta},0.5<\zeta\leq 1$ or in an adaptive manner to speed
up the convergence~\cite{stepsize-stochapp}.

\end{sloppypar}

\begin{algorithm}[t]
\caption{\label{algo:iotspd} \textsc{IoTC-SL}:
Stochastic Primal Dual Algorithm for IoTC Control.}
{\small
\begin{algorithmic}[1]

\STATE Initialize the primal variables to ${\mathbf z}_{0}$ and the
dual variables to ${\mathbf p}_{0}.$

\FOR{iterations $n=1$ to $n=N_{it}$}

\STATE For each interfering cell pair $(e,c)$ draw a random path loss sample from the
joint distribution histogram of $(L_e,L_{e\rightarrow c})$ and the replace the path
loss random variables by these samples in the expression of $h(.)$
in~(\ref{eqn:compcons}).

\STATE Update the primal variables according to gradient ascent based update rule
in~(\ref{eqn:pud}).

\STATE Update the dual variables according to gradient descent based update rule
in~(\ref{eqn:dud}).

\STATE Maintain average of the primal and dual variables over all the iterations
\ENDFOR

\STATE The average of the primal variables over all the iterations is the output.

\end{algorithmic}
}
\end{algorithm}

\begin{sloppypar}
{\bf Asymptotic Optimality of Algorithm~IoTC-SL}
We will now state the main analytical result for Algorithm~IoTC-SL. The following
result shows that the iterates generated by
Algorithm~IoTC-SCP converge to the optimal solution for ``almost" every {\em
sample path} of the iterates.

\end{sloppypar}



\begin{theorem}
\label{THM:SPD}
Suppose ${\mathbf z}^*$ is the optimal solution of IoTC-SCP. Then, the following holds:
$${\mathbf z}_n\rightarrow {\mathbf z}^*\ \ \text{with probability}\ 1\ .$$
\end{theorem}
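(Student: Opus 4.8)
The plan is to recognize Algorithm~IoTC-SL as a single–time-scale stochastic approximation recursion on the primal–dual pair $(\bz_n,\bp_n)$ and to invoke the ODE method of stochastic approximation~\cite{borkarstochapp}. First I would establish that the sampled drift is a conditionally unbiased estimate of the true gradient. Because the coin tosses $\chi_{ec}$ and the path-loss samples $\xi_{ec}$ are drawn \emph{independently} in each iteration from the exact histograms --- whose parameters are $\Pr(O_{e\rightarrow c}=1)$ and the joint law of $(L_e,L_{e\rightarrow c})$ --- we have, conditioned on the natural filtration $\mathcal{F}_n=\sigma(\bz_0,\bp_0,\dots,\bz_n,\bp_n)$,
\[
\Ex\!\left[h(\bz_n,\boldsymbol{\chi},\boldsymbol{\xi})\mid \mathcal{F}_n\right]
=\Ex\!\left[h(\bz_n,\boldsymbol{O},\boldsymbol{\Lambda})\right].
\]
Hence the updates~(\ref{eqn:pud})--(\ref{eqn:dud}) can be written as $\bz_{n+1}=\Pi_Z[\bz_n+a_n(\nabla_{\bz}\Ex\mL + M^{z}_{n+1})]$ and $\bp_{n+1}=[\bp_n+a_n(\Ex h+M^{p}_{n+1})]^+$, where $M^{z}_{n+1},M^{p}_{n+1}$ form a martingale-difference sequence with respect to $\{\mathcal{F}_n\}$.

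Second I would verify the standing hypotheses of the ODE method. The step sizes $a_n=1/n^{\zeta}$ with $0.5<\zeta\le 1$ satisfy $\sum_n a_n=\infty$ and $\sum_n a_n^2<\infty$. The primal iterates are projected into the compact box of Lemma~\ref{LEM:PUB}; since the histograms have bounded support, on this box $\mL$ and its gradient are Lipschitz and the conditional second moments $\Ex[\|M^{z}_{n+1}\|^2\mid\mathcal{F}_n]$ and $\Ex[\|M^{p}_{n+1}\|^2\mid\mathcal{F}_n]$ are uniformly bounded. Boundedness of the dual iterates $\bp_n$ follows from a Slater argument: the interference constraints~(\ref{eqn:con2}) admit a strictly feasible point (take the $\pi_e$ small), which bounds the optimal multipliers and, together with the projection $[\cdot]^+$ and square-summable steps, confines $\bp_n$ to a compact set almost surely.

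Third, the associated mean-field ODE is the projected primal--dual gradient flow
\[
\dot{\bz}=\Pi_Z\!\bigl[\nabla_{\bz}\Ex\mL(\bz,\bp)\bigr],\qquad
\dot{\bp}=\Pi_{+}\!\bigl[-\nabla_{\bp}\Ex\mL(\bz,\bp)\bigr]=\Pi_{+}\!\bigl[\Ex h(\bz)\bigr],
\]
whose rest points are exactly the saddle points of $\Ex\mL$; by the strong duality recorded in~(\ref{eqn:sp}) these coincide with the optimal primal--dual pairs $(\bz^*,\bp^*)$ of IoTC-SCP. I would then establish global asymptotic stability of $(\bz^*,\bp^*)$ using the Lyapunov function $W(\bz,\bp)=\tfrac12\|\bz-\bz^*\|^2+\tfrac12\|\bp-\bp^*\|^2$: the convex--concave structure of $\Ex\mL$ gives $\dot W\le 0$, and the \emph{strict} concavity of $V(\gamma)=\ln\ln(1+e^{\gamma})$ in the $\gamma_c(b)$ coordinates pins down a unique primal optimizer, so LaSalle's invariance principle drives the trajectory into $\{(\bz^*,\bp^*)\}$. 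Borkar's theorem then yields $\bz_n\to\bz^*$ almost surely.

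The hard part will be this last step: plain gradient descent--ascent on a convex--concave Lagrangian need not converge pointwise --- trajectories can orbit the saddle --- so the argument must genuinely exploit the strict concavity of $V$ and the compactness furnished by Lemma~\ref{LEM:PUB} and Slater's condition to rule out limit cycles before LaSalle applies. Establishing the uniform bound on the dual multipliers $\bp_n$ is the second delicate point, since without it the noise and drift estimates in the ODE method cannot be closed.
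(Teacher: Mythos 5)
Your proposal follows essentially the same route as the paper's proof: it casts Algorithm~IoTC-SL as a projected stochastic approximation and invokes Borkar's ODE method (verifying Lipschitz drift and noise second-moment bounds over the compact box of Lemma~\ref{LEM:PUB}), then proves global stability of the limiting primal--dual ODE via the Lyapunov function $\normsq{\bz-\bz^*}+\normsq{\bp-\bp^*}$, the convex--concave saddle structure of the expected Lagrangian, and LaSalle's invariance principle. If anything, you are more careful than the paper on two points it glosses over --- almost-sure boundedness of the dual iterates (the paper only projects duals onto the nonnegative orthant, yet its growth condition involves $\normsq{\bp_n}$) and the strictness needed for LaSalle to single out the saddle point rather than a larger invariant set --- so your flagged ``hard parts'' are exactly where the paper's argument is thinnest.
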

\begin{proof} See Appendix.
\end{proof}

\vspace{-0.15in}
\begin{sloppypar}
\section{Regression Based Certainty Equivalent Heuristic}
\label{sec:algoreg}
\end{sloppypar}

For ease of exposition, we re-write the random interference in
constraint~(\ref{eqn:con2}) of IoTC-SCP which is
{\small
\begin{equation}
\label{eqn:reintcon}
\forall\ c\in \mC:\  \theta_c  \geq \Ex\left[\ln \left( 
\sum_{e\in \mJ_c}
O_{e\rightarrow c} e^{(\pi_e +\alpha_e \Lambda_e - \Lambda_{e\rightarrow
c})} + N_0
\right)\right]\ .
\end{equation}
}
The basic idea of this heuristic consists of two high-level methods. First is the
notion of {\em
certainty equivalence} where we replace the random log-interference by log of
expected interference in~(\ref{eqn:reintcon})
such that any solution with this modified constraint is a feasible solution of
IoTC-SCP. Second is the notion of {\em regression}, where, using a known parametric
distribution as model for the path loss statistics, we perform a fitting of the
distribution parameters using the measurement data. Now the problem can be shown to
be a standard non-linear program (NLP) that can be solved using off-the-self NLP
solvers. We next explain the different steps. 

{\bf Certainty equivalent approximation:} We approximate the random interference
constraint~(\ref{eqn:reintcon}) as follows:
{\small
\begin{equation}
\label{eqn:ce1}
\forall\ c\in \mC:\  \theta_c  \geq \ln \left( 
\sum_{e\in \mJ_c}
\Ex\left[
O_{e\rightarrow c} e^{(\pi_e +\alpha_e \Lambda_e - \Lambda_{e\rightarrow
c})} + N_0
\right] \right)\ .
\end{equation}
}
We now make a very important observation.  Denote by $F_{approx}$ the feasible
solution region of the optimization problem obtained by the above
modification~(\ref{eqn:ce1}) of interference constraint~(\ref{eqn:con2})  in
IoTC-SCP.  If $F_{IoTC-SCP}$ is the feasible region of IoTC-SCP, then it can be shown
from Jensen's inequality that $(\ref{eqn:ce1})\Rightarrow (\ref{eqn:reintcon})$, and thus,
\[ F_{approx} \subseteq F_{IoTC-SCP}\ .\]
This means, if we replace constraint~(\ref{eqn:reintcon}) by~(\ref{eqn:ce1}), 
we will produce feasible solution to IoTC-SCP.

Denoting $a_{e\rightarrow c}=\Ex[O_{e\rightarrow c}]$,~(\ref{eqn:ce1}) can be rewritten as
\begin{equation}
\label{eqn:ce1a}
\theta_c \geq \ln\left(\sum_{e\in \mJ_c}a_{e\rightarrow c} g(P^{(0)}_e,\alpha_e) + N_0\right)\ ,
\end{equation}
where
\begin{equation}
\label{eqn:ce1b}
g(\pi_e,\alpha_e) = \Ex[e^{(\pi_e +\alpha_e \Lambda_e - 
\Lambda_{e\rightarrow c})}]\ .
\end{equation}

{\bf Model for path losses:} To simplify the r.h.s. of ~(\ref{eqn:ce1}), we
use the following widely used path-loss model:

\begin{sloppypar}
{\sc Distribution Model for Path loss:} {\em For each interfering cell pair, $(c,e)$, the joint distribution of the random
variables $(L_e, L_{e\rightarrow c})$ are log-normal. In other words, the
distribution of $(\Lambda_e, \Lambda_{e\rightarrow c})=(\ln L_e, \ln L_{e\rightarrow
c})$ is jointly Normal.}
\end{sloppypar}

Indeed, log-normal distribution is a very reasonable statistical model for spatial
path loss distribution~\cite{goldsmith-wirelss}. For ease of exposition, we define
the following two dimensional vectors: 

\begin{equation}
{\mathbf Z}_{ec} =
\left[ \begin{array}{c} \Lambda_e \\ \Lambda_{e\rightarrow c} \end{array}\right],\ \ 
\beta_e=
\left[ \begin{array}{c} \alpha_e \\ -1 \end{array}\right],\ \ 
\label{eqn:cezbeta}
\end{equation}
${\mathbf Z}_{ec}$ is defined for each interfering cell pair $(c,e)$ and $\beta_e$ for
each cell $e$. Note that, under our modeling assumption, the random variable ${\mathbf
Z}_{ec}$ is bi-variate jointly Normal random variable. The parameters of ${\mathbf
Z}_{ec}$ are defined by the following mean and correlation matrix:

\begin{equation}
{\mathbf m}_{ec} = \Ex[{\mathbf Z}_{ec}] \ ,\ 
{\mathbf C}_{ec} = 
\Ex[({\mathbf Z}_{ec}-{\mathbf m}_{e}) ({\mathbf Z}_{ec}-{\mathbf m}_{e})^t]
\label{eqn:cemc}
\end{equation}
 Note that ${\mathbf m}_{ec}$ is a $2\times 1$ vector and ${\mathbf C}_{ec}$ is a
$2\times 2$ symmetric positive definite matrix. Also, ${\mathbf m}_{ec},{\mathbf C}_{ec}$
are model parameters that can be estimated from the measurement data using standard techniques
for Gaussian parameter estimation~\cite{kay-estimation} (also outlined in
Algorithm~\ref{algo:iotcce}).

{\bf Model based simplification of IoTC-SCP:} 
Using the definition of ${\mathbf Z}_{ce}$ in~(\ref{eqn:cezbeta}), we can
rewrite $g(\pi_e,\alpha_e)$ in~(\ref{eqn:ce1b}) as follows.
\begin{equation*}
g(\pi_e,\alpha_e) = \Ex[\exp(\pi_e + \beta_e^t {\mathbf Z}_{ec})]
\end{equation*}
Since ${\mathbf Z}_{ec}$ is Normal, the above expression can be simplified using
moment generating functions of multivariate Normal random variables
as~\cite{kay-estimation}:
\begin{equation}
\hat{g}(\pi_e,\alpha_e) = \exp\left( \pi_e + \beta_e^t {\mathbf m}_{e} + 
\smfrac{1}{2}\beta_e^t {\mathbf C}_{ec}\beta_e \right)
\label{eqn:ceei}
\end{equation}
where we write $\hat{g}(.)$ to mean that it is an estimate of $g(.)$ under
the assumed statistical model of path losses. We can now solve the following
modified version of IoTC-SCP problem where we replace the interference
constraint~(\ref{eqn:con2}) by
\[ I_c \geq \ln\left( \sum_{e\in \mJ_c } a_{e\rightarrow c}\hat{g}(\pi_e,\alpha_e) + N_0\right) .\]
This is stated as follows:\\
\begin{center}
\fbox{\parbox[c]{0.95\linewidth}{
{\small
\begin{align*}
&\text{\bf IoTC-CE:}  \nonumber \\
&\max_{\{\pi_e\},\{\alpha_e\},\{\gamma_c(b)\}} \sum_{c,b}\rho_c p_c(b)V(\gamma_c(b)) \nonumber\\
&\text{subject to,} \nonumber\\
& \ \ \ \gamma_c(b)  \leq \pi_c - (1-\alpha_c) \lambda_c(b) -
\theta_c \nonumber\\
& \ \ \ \theta_c  \geq 
\ln\left[ 
\sum_{e\in \mJ_c}a_{e\rightarrow c}e^{( \pi_e + \beta_e^t {\mathbf m}_{ec} + 
\smfrac{1}{2}\beta_e^t {\mathbf C}_{ec}\beta_e)} + N_0
\right] \nonumber \\
& \ \ \ \pi_c + \alpha_c \lambda_c(b)\leq \ln
P_{max}\nonumber \\
& \ \ \  \alpha_c \in [0,1],\ \theta_c\in [\ln N_0,\ln I_{max}],\ \gamma_c(b)\in [\gamma_{min},\infty)
\end{align*}
}
\vspace{-0.1in}
}}
\end{center}

\begin{algorithm}[t]
\caption{\label{algo:iotcce} \textsc{SolveIoTC-CE}:
Regression based deterministic heuristic for IoTC COntrol.}
{\small
\begin{algorithmic}[1]

\STATE {\em Regression step:} For each interfering cell pair $(e,c)$, we
use the path-loss measurements to find a normal fit of the Gaussian random
random variable ${\mathbf Z}_{ec}$ given by~(\ref{eqn:cezbeta}).
Suppose the histogram bins based on the measurement of joint distribution of
$(L_e,L_{e\rightarrow c})$ are indexed by $b$ where $b=1,2,\hdots$. Suppose
bin-$b$ represents the values $(l_e(b),l_{e\rightarrow c}(b))$ and has empirical
probability mass $p_{{ec}}(b)$. Let $(\lambda_e,\lambda_{e\rightarrow
c})=(\ln l_e(b),\ln l_{e\rightarrow c}(b))$.
Then a maximum-likelihood estimator~\cite{kay-estimation} of ${\mathbf m}_{ec},{\mathbf
C}_{ec}$ is as follows:
\begin{equation*} 
{\mathbf m}_{ec} =
\left[
\begin{array}{c}
\sum_b p_{ec}(b) \lambda_e(b)\\
\sum_b p_{ec}(b) \lambda_{e\rightarrow c}(b)
\end{array}
\right], \
{\mathbf C}_{ec} =
\left[ 
\begin{array}{cc}
\hat{\sigma_e^2}\ & \hat{\sigma_{ec}^2} \\
\hat{\sigma_{ec}^2} & \hat{\sigma_e^2}
\end{array}
\right]
\end{equation*}
where,
\begin{align*} 
\hat{\sigma_{e}^2} & =
\sum_b p_{ec}(b) (\lambda_e(b)-{\mathbf m}_{ec}[1])^2 \\
\hat{\sigma_{ec}^2} & =
\sum_b p_{ec}(b) (\lambda_e(b)-{\mathbf m}_{ec}[1])
(\lambda_{e\rightarrow c}(b)-{\mathbf m}_{ec}[2]) 
\end{align*}

\STATE {\em Deterministic non-linear program solving:}
With the above estimates, we solve the NLP given by IoTC-CE 
using standard tools for solving deterministic non-linear
programs (such as, dual based techniques, primal-dual techniques etc.)~\cite{bertNLP}. 

\end{algorithmic}
}
\end{algorithm}

IoTC-CE is convex deterministic NLP (and
hence standard tools of solving NLP can be used), provided the function
$\hat{g}(\pi_e,\alpha_e))$ is convex in $(\pi_e, \beta_e)$. The following lemma shows
shows that $\hat{g}(\pi_e,\alpha_e))$ is indeed a convex function.

\begin{lemma}
The function $\ln(\sum_{e\in \mJ_c}\hat{g}(\pi_e,\alpha_e)+N_0)$ is convex in $\pi_e$
and $\alpha_e$. Thus IoTC-CE is a convex NLP.
\end{lemma}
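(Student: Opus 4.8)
The plan is to reduce the claim to the standard fact that a log-sum-exp function composed with convex inner functions is convex. The first step is to make the exponent of $\hat g$ explicit. Writing ${\mathbf m}_{ec}=(m_1,m_2)^t$, denoting the entries of the symmetric positive definite matrix ${\mathbf C}_{ec}$ by $C_{11},C_{12},C_{22}$, and recalling $\beta_e=(\alpha_e,-1)^t$ from~(\ref{eqn:cezbeta}), a direct substitution into~(\ref{eqn:ceei}) gives $\beta_e^t{\mathbf m}_{ec}=\alpha_e m_1 - m_2$ and $\beta_e^t{\mathbf C}_{ec}\beta_e = C_{11}\alpha_e^2 - 2C_{12}\alpha_e + C_{22}$, so that
\begin{equation*}
\hat g(\pi_e,\alpha_e)=\exp\bigl(q_e(\pi_e,\alpha_e)\bigr),
\end{equation*}
where the exponent is the quadratic
\begin{equation*}
q_e(\pi_e,\alpha_e)=\pi_e+\alpha_e m_1 - m_2 + \smfrac12 C_{11}\alpha_e^2 - C_{12}\alpha_e + \smfrac12 C_{22}
\end{equation*}
in the two cell-$e$ decision variables $(\pi_e,\alpha_e)$.

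The second step is to check that $q_e$ is jointly convex in $(\pi_e,\alpha_e)$. Its Hessian has a single nonzero entry, $\partial^2 q_e/\partial\alpha_e^2 = C_{11}$, while $\partial^2 q_e/\partial\pi_e^2=0$ and the cross partial vanishes, since $\pi_e$ enters linearly and there is no $\pi_e\alpha_e$ term. Because ${\mathbf C}_{ec}$ is a covariance matrix (defined in~(\ref{eqn:cemc})) it is positive semidefinite, so its diagonal entry $C_{11}=\Ex[(\Lambda_e-m_1)^2]\ge 0$; hence the Hessian $\operatorname{diag}(0,C_{11})$ is PSD and $q_e$ is convex. Invariantly, $q_e$ is an affine function of $(\pi_e,\alpha_e)$ plus the term $\smfrac12\beta_e^t{\mathbf C}_{ec}\beta_e$, which is a PSD quadratic form composed with the affine map $\alpha_e\mapsto\beta_e$, and therefore convex.

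The final step invokes the composition rule. The map $\operatorname{lse}(y_1,\dots,y_{|\mJ_c|})=\ln\bigl(\sum_{e\in\mJ_c} e^{y_e}+N_0\bigr)$ is convex (log-sum-exp, treating $N_0=e^{\ln N_0}$ as a fixed extra coordinate~\cite{ghaconv}) and nondecreasing in each argument, since each partial derivative $e^{y_e}/(\sum_k e^{y_k}+N_0)$ is positive. As each inner function $y_e=q_e(\pi_e,\alpha_e)$ is convex, the vector-composition theorem for convex functions yields that $\operatorname{lse}(q_{e_1},\dots)$ is convex in the full variable vector $(\pi_e,\alpha_e)_{e\in\mJ_c}$; this is exactly the convexity of $\ln(\sum_{e\in\mJ_c}\hat g(\pi_e,\alpha_e)+N_0)$. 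The multiplicative weights $a_{e\rightarrow c}\ge 0$ appearing in IoTC-CE only shift each exponent by the constant $\ln a_{e\rightarrow c}$ (or drop the term when $a_{e\rightarrow c}=0$) and so do not affect convexity. Consequently the interference constraint $\theta_c\ge\ln(\cdots)$ defines a convex region, being a convex function bounded above by the linear quantity $\theta_c$; together with the remaining linear constraints and the concave objective $\sum_{c,b}\rho_c p_c(b)V(\gamma_c(b))$, this establishes that IoTC-CE is a convex NLP. The only delicate point is the convexity of the exponent in the second step: one must verify that the sign pattern of $\beta_e=(\alpha_e,-1)^t$ together with the PSD-ness of ${\mathbf C}_{ec}$ genuinely produces a convex (not merely quadratic) argument; the composition step is then routine.
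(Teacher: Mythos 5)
Your proof is correct and follows essentially the same route as the paper, which simply invokes the rule that a convex, componentwise nondecreasing function (here log-sum-exp with the constant $N_0$) composed with convex inner functions is convex. The paper omits all details ``for want of space''; your write-up supplies exactly those omitted verifications --- the explicit quadratic exponent $q_e$, its PSD Hessian $\operatorname{diag}(0,C_{11})$, the monotonicity of the outer function, and the harmlessness of the weights $a_{e\rightarrow c}$ --- so it is a fleshed-out version of the paper's argument rather than a different one.
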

\begin{proof}
Since convex and component-wise increasing function of a vector
of convex functions is convex~\cite{convopt}, the result follows.  The details are
omitted for want of space.
\end{proof} 

{\bf Overall algorithm by putting it all together:} As outlined in Algorithm~\ref{algo:iotcce}, 
we have two steps. First we use the
measurement data to estimate the parameter-matrices
${\mathbf m}_e, {\mathbf C}_{ec}$ of the distribution ${\mathbf Z}_{ce}$. Second, we
use these values to solve IoTC-CE using standard NLP solving techniques.

The main benefit of the approach in this section is that, it is computationally not
so intensive and can be solved using powerful commercial NLP solvers; the price we pay is the sub-optimality. 


\vspace{-0.15in}
\section{Evaluation using Propagation Map from a Real LTE Deployment}
\label{sec:eval}

The primary goal of our evaluation is two fold. First, to understand the gains provided by
LeAP as compared to other popular approaches. Second, to understand the effect of
important design parameters like histogram bin-size and IoT-cap on LeAP performance.
In addition, we also provide a comparison of the two proposed LeAP algorithms.

\vspace{-0.1in}
\subsection{Evaluation Platform} 

{\bf Evaluation platform:}  For our evaluation purpose, we implemented the following
components in Figure~\ref{fig:comparch}: 
SQL database (DB) that saves network
information, the data access layer which interacts with the DB and creates the
measurement data discussed in Section~\ref{sec:meas},
and the IoT-control parameter
($P^{(0)},\alpha$'s) computation engine.  All our implementations were based on .NET
using C\# and are multi-threaded using the Task Parallelization Library (TPL) facilities. We
implemented Algorithm~\ref{algo:iotspd} within our framework using suitable data
structures to enable multi-threaded  implementation. We skip these implementation details 
for want of space.  For network measurement information saved in the DB, we used real signal
propagation map from an actual deployment but synthetic mobile locations generated
using a commercial network planning tool in a manner described shortly.

Note that, in our evaluation platform, we have not yet implemented
Algorithm~\ref{algo:iotcce}. Such implementation requires integrating a .NET
compatible commercial NLP solver within our architecture, a feature that is in
progress. However, just for the purpose of comparison, we have developed a
stand-alone implementation of Algorithm~\ref{algo:iotcce} using CVX~\cite{cvx} a free
NLP Matlab solver. This can operate only on moderate sized networks with restricted
objective function forms and is very slow for practical needs.

\vspace{-0.1in}
\subsection{Evaluation Setup and Methodology}

\begin{figure}[t]
{\small
\begin{center}
\includegraphics[height=1.4in,width=2.7in]{./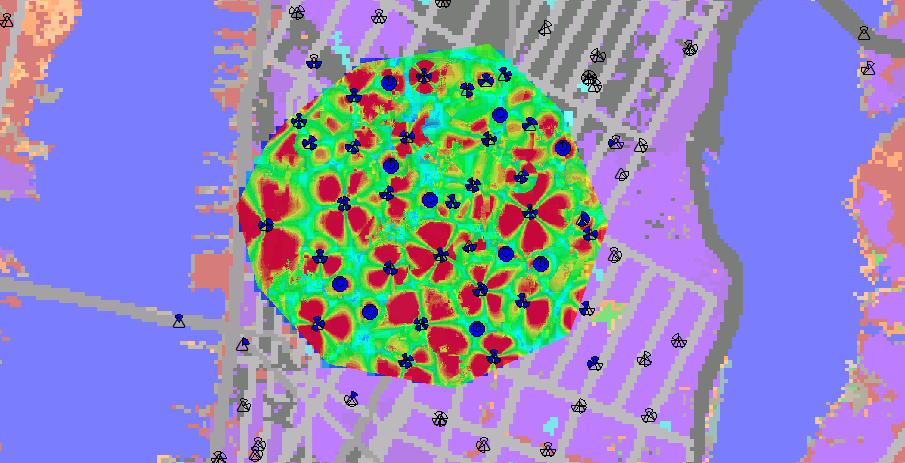}
\vspace{-0.1in}
\caption{\label{fig:nyctop}
{\small LTE signal map in evaluated section of a major US~metro. 
The heat map is based on downlink SINR as the uplink SINR depends on specific FPC
scheme.}}
\end{center}
}
\vspace{-0.3in}
\end{figure}

We used signals propagation maps generated by a real operational LTE network in
a major US~metro area\footnote{The name of the city and the operator cannot be
revealed for proprietary reasons.}
(see Figure~\ref{fig:nyctop} for the propagation heat map).  The terrain
category, cell site locations, and drive-test propagation data from the network were
fed into a commercially available Radio Network Planning (RNP) tool that is used by
operators for cellular planing~\cite{9955tool}. The carrier bandwidth is 10~MHz in
the 700~MHz LTE band.  There is no dependency of LeAP design to the specific tool
used for evaluation purpose.

For our evaluation, we selected an area of around 9~$\text{km}^2$ in the central
business district of the city with 115~macro cells. This part of the city has a very high density of macro
cells due to high volume of mobile data-traffic.  Going forward,
LTE deployments are going to have low-power pico cells in addition to high power
macro cells. Since picos are not yet deployed in reality, 10~pico locations were
manually embedded into the network planning tool using its built-in capabilities.
Macros have transmit power 40~W and picos have transmit power 4~W. All including, we
have 125~cells in the evaluated topology.

Though we had propagation data based on a real deployment, we do not have access to
actual mobile measurements at this stage. To generate measurement data required for
our evaluation, we generated synthetic mobile locations using the capabilities of RNP
tool. These location were used to generate measurement data as follows:

\begin{enumerate}
\item Using the drive-test calibrated data, terrain information and
statistical channel models, the RNP tool was used to generate signal propagation
matrix in every pixel in the area of interest in a major US~metro as shown in
Figure~\ref{fig:nyctop}.

\item The RNP tool was then used to drop thousands of UEs in
several locations where the density of dropped UEs was as per dense-urban
density (450 active mobile per sq-km). In addition, we defined mobile hotspots
around some of the picos where the mobile density was doubled. Based on the signal propagation matrix in
every pixel and mobile drop locations, the RNP tool readily generated all path loss data from
UEs to it serving cell and neighboring cells and also mobile to cell association matrix.

\item The mobile path loss data was used to generate the histograms and the other
measurement KPIs proposed in Section~\ref{sec:meas}.  This data was then fed into
LeAP database. 

\end{enumerate}

Once the measurement data is available, the results were generated as follows:

\begin{enumerate}
\item The measurement data from LeAP database was read by our implementation of
LeAP algorithm that generated the cell power control parameters. 

\item To evaluate the gains, the parameters generated by LeAP algorithm
(or any other comparative algorithm) were evaluated for a random UE snapshot
generated by RNP tool where each UE's data-rate based on its SINR-target given
by~(\ref{eqn:sinrtarget}) was computed.

\end{enumerate}

\begin{figure}[t]
\begin{center}
\subfigure[]{
\includegraphics[height=1.6in, width=3.2in]{./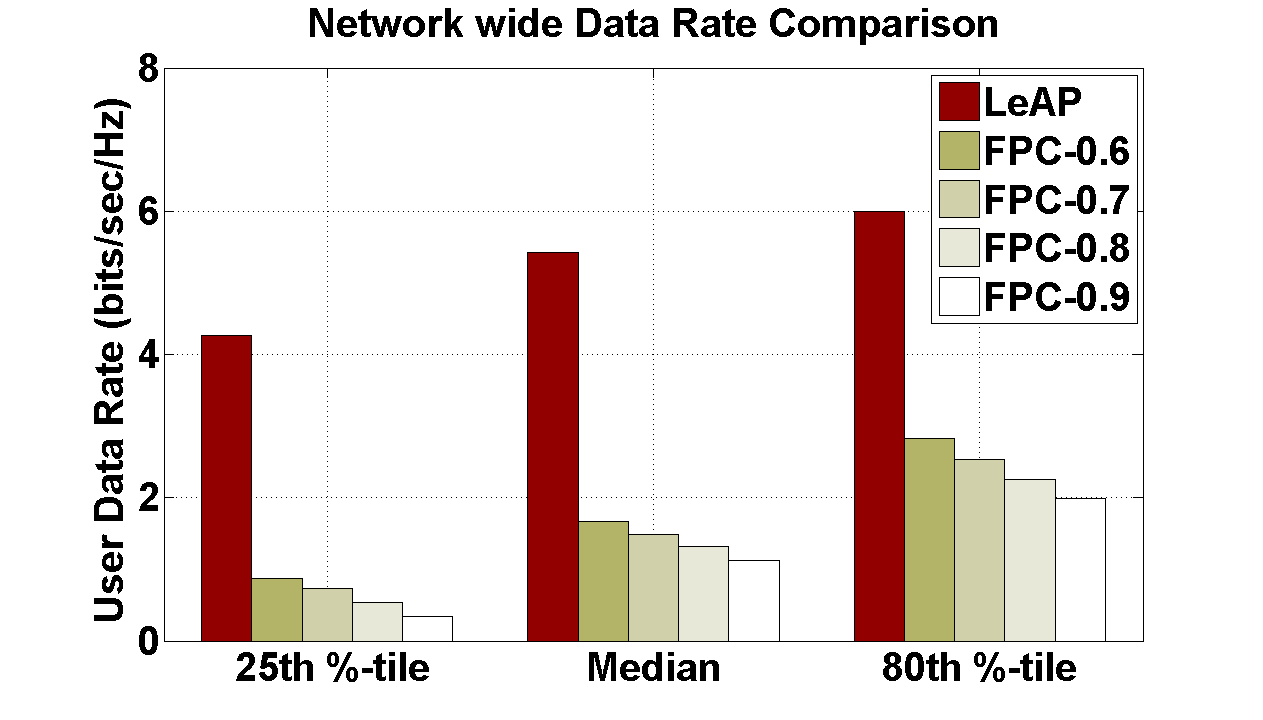}
\vspace{-0.1in}
\label{fig:allperf}}
\subfigure[]{
\includegraphics[height=1.6in, width=3.2in]{./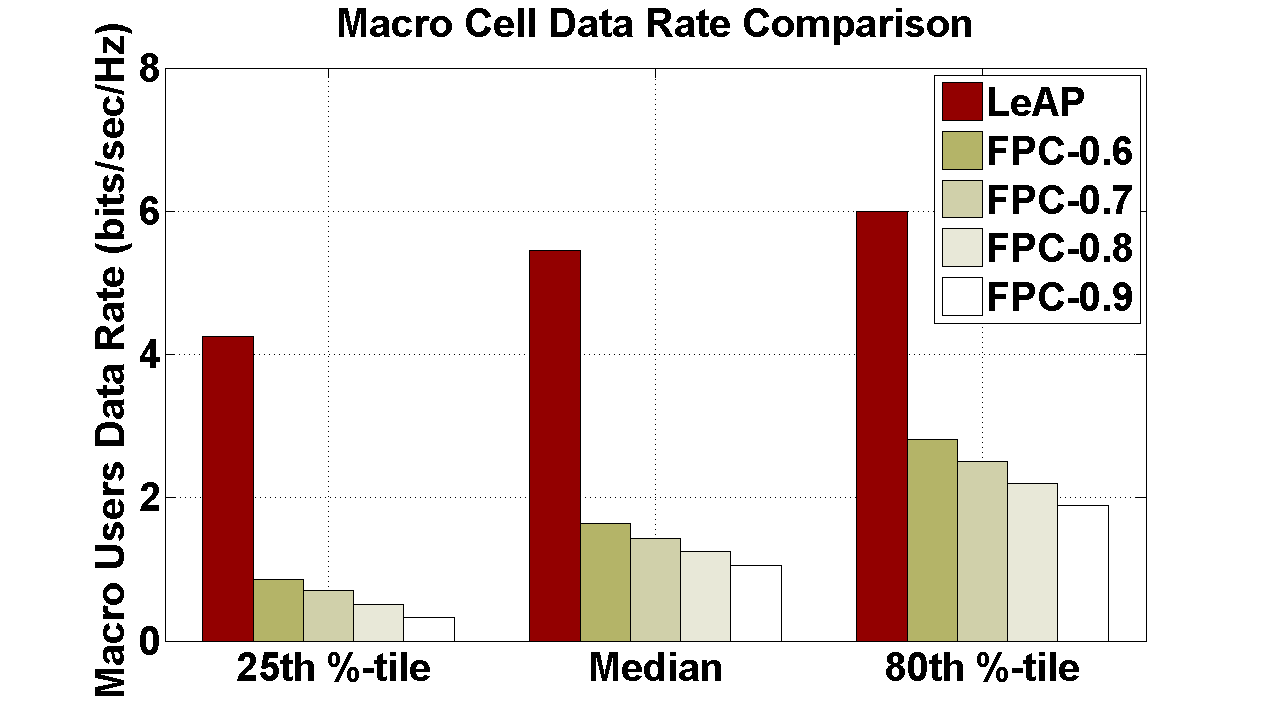}
\vspace{-0.1in}
\label{fig:macroperf}}
\subfigure[]{
\includegraphics[height=1.6in, width=3.2in]{./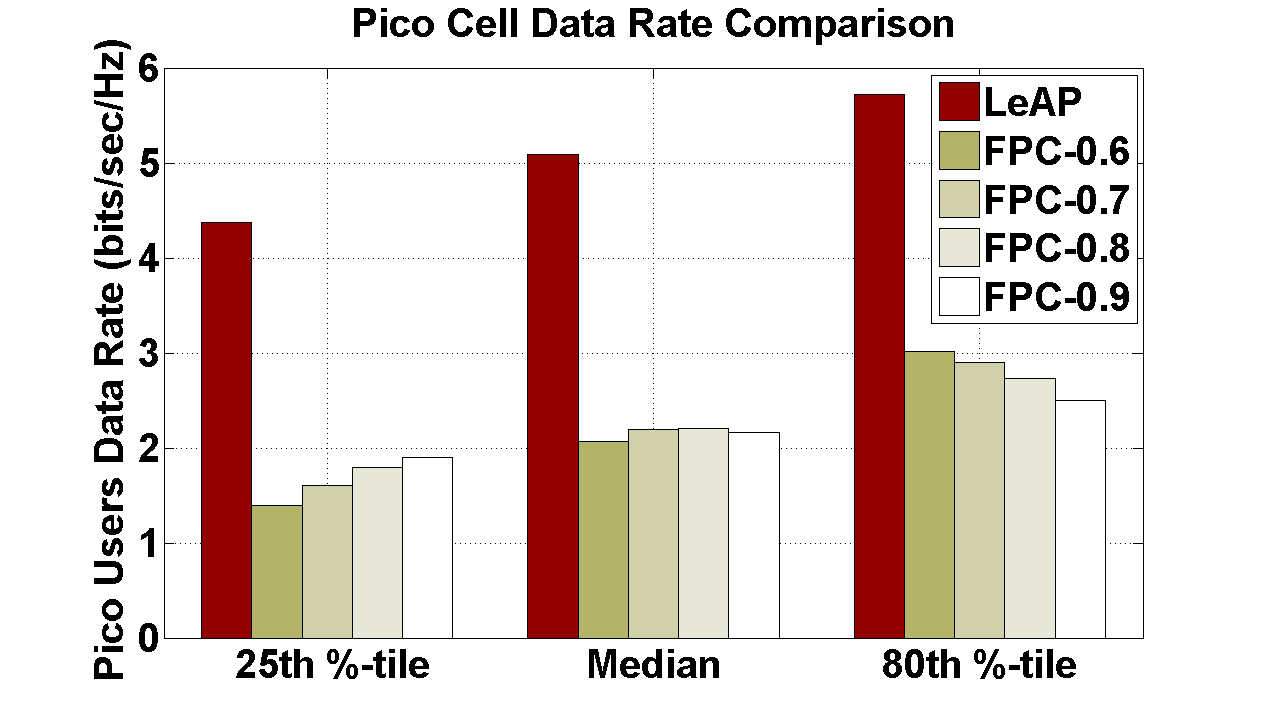}
\vspace{-0.1in}
\label{fig:picoperf}}
\vspace{-0.25in}
\caption {\small LeAP performance gain over fixed-$\alpha$ FPC for (a) all UEs, (b) macro cell UEs, 
and (c) pico cell UEs.\label{fig:leapperf}}
\vspace{-0.25in}
\end{center}
\end{figure}

{\bf Comparative schemes and LeAP parameters:} We compared the following schemes for our evaluation:

\noindent
1) {\bf LeAP Algorithm:} For most of our evaluation, we use
Algorithm~\ref{algo:iotspd} which has provable guarantees at a small expense of
computation time. Algorithm~\ref{algo:iotcce} requires expensive off-the-shelf
non-linear solver for large scale networks. Nevertheless, we show the performance of
Algorithm~\ref{algo:iotcce} using a free solver based stand-alone implementation.

\noindent
2) {\bf Fixed-${\boldsymbol \alpha}$ FPC (FA-FPC):} We compare the performance of LeAP with the following
approach popular in literature~\cite{fpcericc,castfpc08}: fix the same value of $\alpha$ (typically $\alpha=0.8$) for all
cells and set $P^{(0)}$ so that every mobile's SINR is above the decoding threshold.
SINR computation is performed using a nominal value of interference $I_{nominal}$
that is typically $5\--15$~dB above the noise floor. 

There are two important design parameters for LeAP: the histogram bin-size and the IoT-cap,
that is partly dictated by hardware designs constraints.  In our
evaluation, default choice of bin-size is 1~dB and the default IoT-cap is 20~dB. In
our problem framework (see IoTC-SCP), we set
$$I_{max}=\text{IoT-cap}-2.\text{bin-size}$$ as a design choice to provide cushion for
the path loss error introduced by histogram binning. We also show results by varying
the bin-size and IoT-cap. Also, $P_{max}=100$~mW/RB.

\vspace{-0.1in}
\subsection{Results}

{\bf LeAP Performance Gains:} In this section we evaluate the performance of
LeAP at various quantiles of the user data rate CDF produced across the
population of the users in our deployment area. Figure~\ref{fig:leapperf}
shows the LeAP performance with FA-FPC for all users, users that are associated
with macro cells, and users that are associated with pico cells. 
Our main observations are as follows:

\begin{sloppypar}
\begin{list}
{\labelitemi}
{\itemsep=0pt \parskip=0pt \parsep=0pt \topsep=0pt \leftmargin=0.2in}
\item As seen in Figure~\ref{fig:allperf} LeAP provides a data rate improvement over the best FA-FPC scheme, of $4.9\times, 3.25\times, 2.12\times$
for the $20^{th},50^{th},80^{th}$~\%-tile respectively. In other words, with LeAP, at least half the
users see a data rate improvement at least $3\times$ and 80\% of the users see an
improvement of at least $2\times$. The larger gain at lower percentiles indicate that
LeAP is really beneficial to users towards the cell edge who are most affected by
inter-cell interference.

\item As seen in Figure~\ref{fig:macroperf},~~\ref{fig:picoperf}, LeAP provides
higher improvement to macro users as compared to pico users. The median improvement
of pico users' data rate is $2.3\times$ whereas it is $3.3\times$ for macro users.
This can be explained by the fact that the macros use high power and have larger coverage 
leading to more users affected by inter-cell interference and thus more users
can reap the benefits of LeAP.

\end{list}
\end{sloppypar}

\begin{figure}[t]
\begin{center}
\includegraphics[height=1.6in, width=3.05in]{./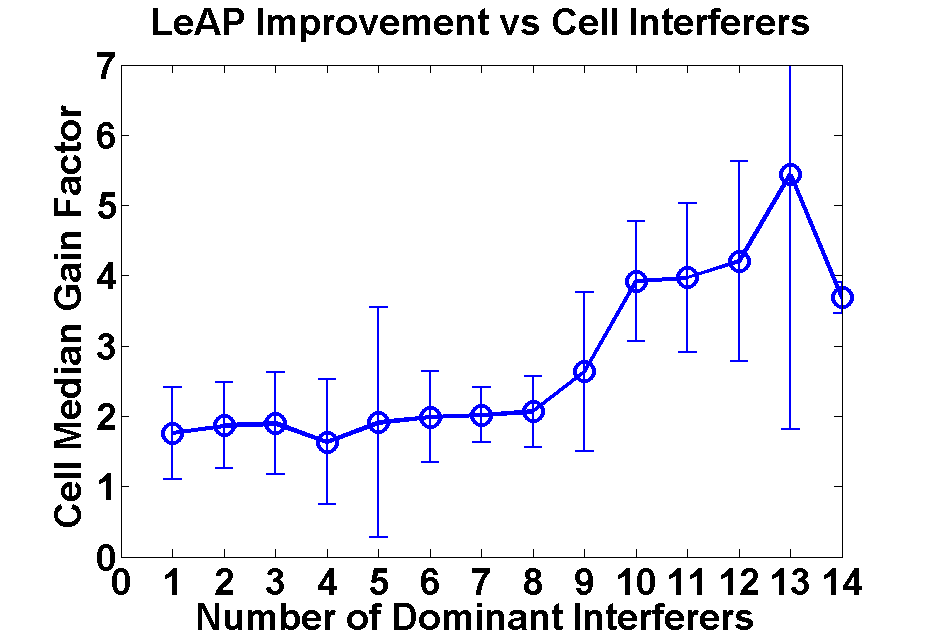}
\vspace{-0.15in}
\caption {\small Cell-wise LeAP gain vs. number of interferers.
\label{fig:cellint}}
\vspace{-0.35in}
\end{center}
\end{figure}

\begin{figure*}[t]
\begin{center}
\subfigure[]{
\includegraphics[height=1.6in, width=3.25in]{./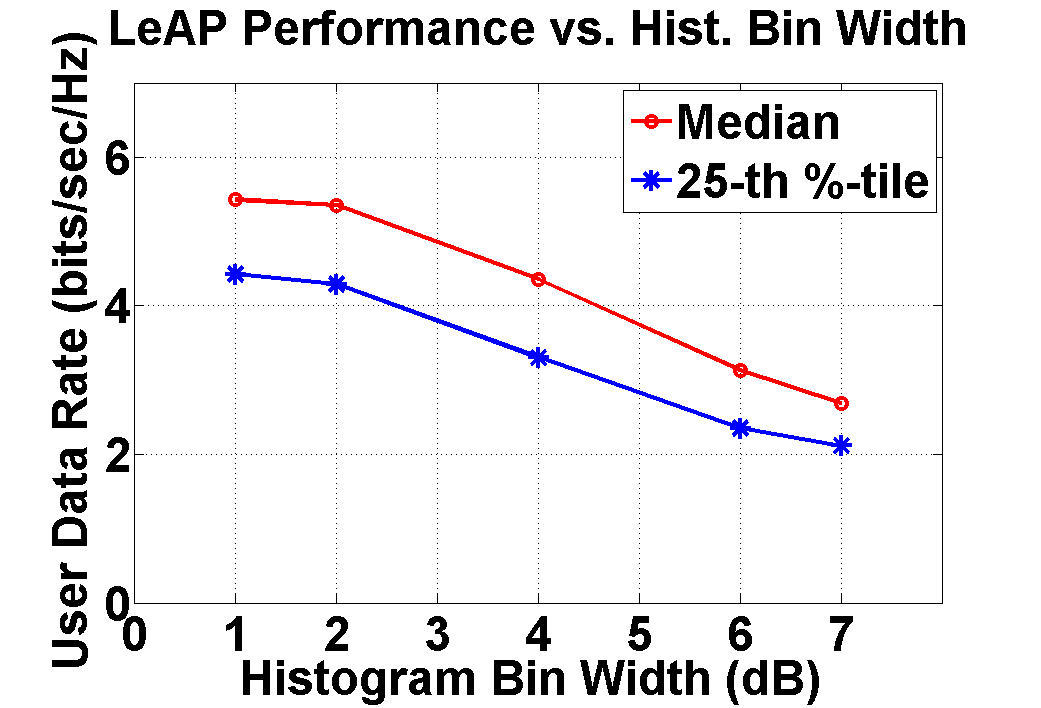}
\label{fig:binperf}}
\subfigure[]{
\includegraphics[height=1.6in, width=3.25in]{./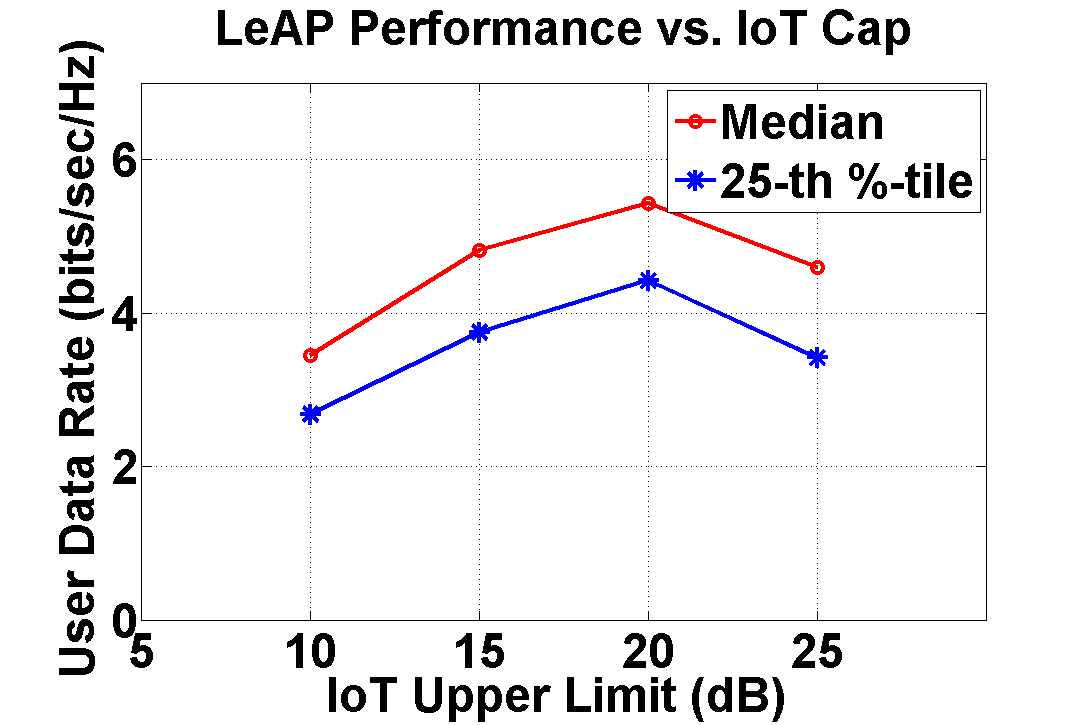}
\label{fig:iotperf}}
\vspace{-0.25in}
\caption {\small
Effect of (a) histogram bin size on median LeAP performance, 
and (b) effect of IoT-cap on median LeAP performance.
\label{fig:paramvar}
}
\vspace{-0.25in}
\end{center}
\end{figure*}

In Figure~\ref{fig:cellint}, we show the gains of different cells in
the network by grouping all the 125 cells based on number of dominant interferers. An
interfering cell is dominant at least $5\%$ of its uplink users, the users that
selected this cell as serving, that can interfere, i.e.,
$\Pr(\text{interference})\geq 0.05.$.  Figure~\ref{fig:cellint} shows that the
typical median data rate gain of cell-groups with 8~or~fewer interferers is around
$2\times$ whereas, the gains are around $4\times$ for cell-groups with more than
10~interferers. We also show error bar for each cell group based on the sample
standard deviation of gains within each group. We remark that the plot shows
irregularity at 13 and 14 interferers due to having just 3 cells with 13\--14
interferers.  These results suggest LeAP and other SON algorithms need not be applied
uniformly across the network but only in the {\em interference} problematic regions
of the network. Identifying these regions from measurement data is an interesting
problem outside the scope of this paper. 

{\bf LeAP performance with variation of histogram bin-size and IoT-cap:} In
Figure~\ref{fig:paramvar}, we show the median performance gains of LeAP by varying
histogram bin-size, with IoT-cap either fixed at 20~dB or varying (with
bin-size fixed at 2~dB) . Our main observations are as follows:
\begin{list}
{\labelitemi}
{\itemsep=0pt \parskip=0pt \parsep=0pt \topsep=0pt \leftmargin=0.2in}
\item {\em Effect of bin-size:} From Figure~\ref{fig:binperf} 
we see that the performance of LeAP and the gains deteriorate
with larger histogram bin-size which is intuitive. However, there is marginal
performance loss for 2~dB bin-size as compared to 1~dB bin-size, but the
performance deteriorates considerably for 6~dB bin-size. This suggests that the
right bin-size should not exceed $2$~dB for optimal LeAP performance.

\item {\em Effect of IoT-cap:} The performance of LeAP with IoT-cap is shown in
Figure~\ref{fig:iotperf}. Till an IoT-cap of around 20~dB,
the median performance improves with increase in $I_{max}$ as there is greater
flexibility in optimization. Past that, the gain decreases as can be seen with an IoT-cap of
25~dB.  This behavior is intriguing and can be explained as follows.  Let $I_{actual}$ be the actual 
interference in the network for given user
location. Also, $I_{max}$ is the maximum interference in IoT-SCP problem formulation.
Note that, $I_{actual}$ could  be higher than $I_{max}$ due to approximations
introduced by histogram binning. Intuitively, with larger $I_{max}$, the gap between
$I_{actual}$ and $I_{max}$ is larger as there are more interfering cells with
non-negligible interference signal. Thus, beyond a point, the binning approximations
somewhat nullify the benefits of increase in $I_{max}$.

\end{list}

{\bf LeAP computation times:} For the 125~cell network, Algorithm~\ref{algo:iotspd}
showed excellent gains after running for $50,000$ iterations.  Using a
3.4~GHz~Intel Xeon CPU with 16 GB RAM on a 6~core machine with 64~bit OS, the
running times of our multi-threaded implementation were on an average {\bf 12~sec}
with a variability of around 5~sec. Since we expect the periodicity of measurement
reports to be in minutes, this shows the scalability of our design.

{\bf Comparison of IoTC-CE and IoTC-SL:} We also have developed a stand-alone
implementation of IoTC-CE algorithm using a open source NLP solver called
CVX~\cite{cvx}.  CVX is a powerful tool for moderate sized problems that can deal
with the constraints in the IoTC-CE problem, that involve log-sum-exponential
functions.  However, CVX cannot operate for 125 cells in our evaluation network due
to associated variable limits in the solver, and so we choose a sub-region consisting
of 35 cells including 4~pico cells. Furthermore, since CVX only takes objective
functions of certain forms, we modified our objective by choosing utility as
$V(\gamma)=\ln(\sinr_{\min}+\ln(\exp(\gamma)))$, clearly an approximation.  

In Figure~\ref{fig:ceslcomp}, we compare the median data rate of the two LeAP
algorithms and the best FPC scheme in our evaluation. IoTC-CE still provides $90\%$
improvement for all cells and around $2\times$ improvements for macro cells. For pico
cells, IoTC-CE outperforms IoTC-SL. Also, the performance of median data rate of
IoTC-CE is $36\%$ less compared to IoTC-SL. This could be partially due to that
the approximated $V(.)$ worsens at lower SINR.

\begin{figure}[t]
\begin{center}
\includegraphics[height=1.6in, width=3.1in]{./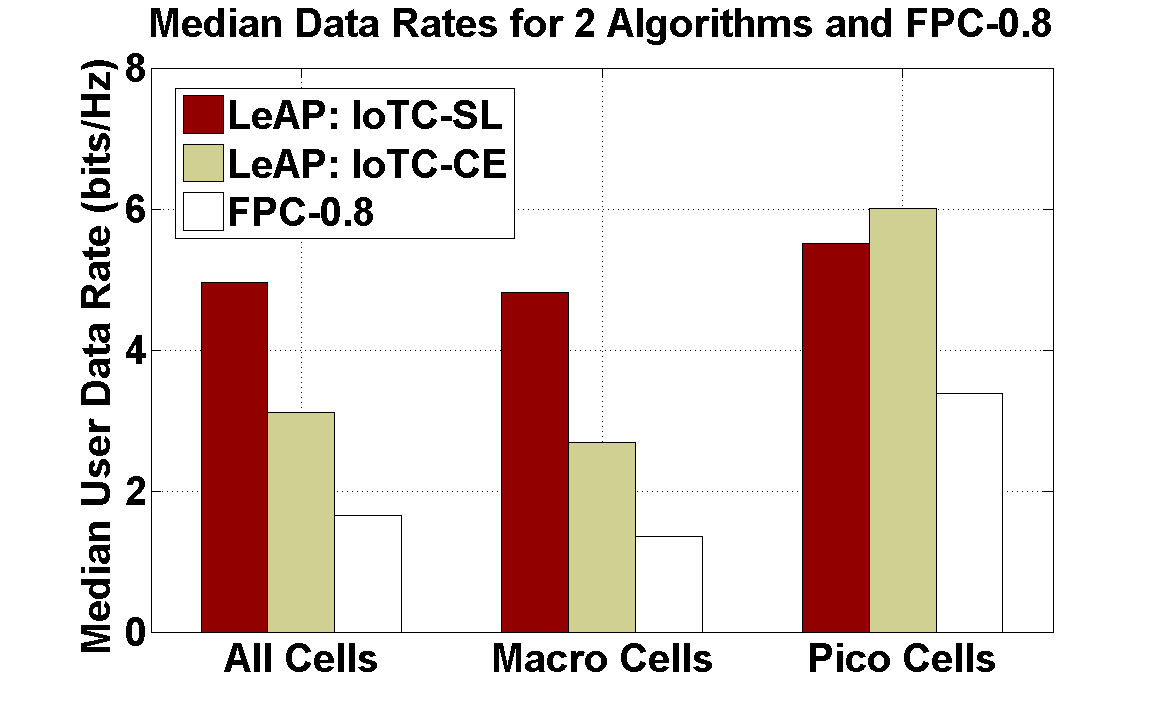}
\vspace{-0.15in}
\caption {\small Median data rate comparison for IoTC-CE and IoTC-SL. Histogram bin width is
$0.5$~dB. \label{fig:ceslcomp}}
\vspace{-0.25in}
\end{center}
\end{figure}

\vspace{-0.2in}
\section{Concluding Remarks}

In this work we have proposed LeAP, a measurement data driven approach for optimizing
uplink performance in an LTE network. Using propagation data from a real operational
LTE network, we show the huge gains to be had using our approach.

\vspace{-0.1in}
{\scriptsize
\bibliographystyle{acm}
\bibliography{myref}
}

\appendices

\section{Two Simple Useful Lemmas}

We will need the following inequalities in our proof. The proofs of the inequalities
are there in a longer version of the paper.

\begin{lemma}
\label{LEM:TECHLEM1}
The following holds for $a_0=1$ and real $a_i >0,i=1,2,\hdots,k$:
$$\ln^2(\sum_{i=0}^k e^{a_i})\leq \ln^2(k+1) + (1+2\ln(k+1))\sum_{i=0}^k a_i^2\ .$$
\end{lemma}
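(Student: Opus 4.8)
The plan is to bound the log-sum-exp on the left by its standard two-term envelope and then reduce everything to the elementary fact $\sqrt{S}\le S$ for $S\ge 1$, where the normalization $a_0=1$ is exactly what guarantees $S=\sum_i a_i^2\ge 1$.

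First I would invoke the elementary log-sum-exp bound: since each $e^{a_i}\le e^{\max_j a_j}$, we have $\sum_{i=0}^k e^{a_i}\le (k+1)\,e^{\max_j a_j}$, hence
\[
\ln\Bigl(\sum_{i=0}^k e^{a_i}\Bigr)\le \ln(k+1)+\max_{0\le i\le k} a_i\ .
\]
Next, because every $a_i>0$, the maximum is controlled by the $\ell^2$-norm: $\max_i a_i=\sqrt{\max_i a_i^2}\le\sqrt{\sum_{i=0}^k a_i^2}$. Writing $S=\sum_{i=0}^k a_i^2$ for brevity, these two steps combine to
\[
\ln\Bigl(\sum_{i=0}^k e^{a_i}\Bigr)\le \ln(k+1)+\sqrt{S}\ .
\]
Both sides here are nonnegative (the left side because $\sum_i e^{a_i}\ge e^{a_0}=e>1$, and the right side obviously), so I may square the inequality without reversing it, obtaining
\[
\ln^2\Bigl(\sum_{i=0}^k e^{a_i}\Bigr)\le \ln^2(k+1)+2\ln(k+1)\sqrt{S}+S\ .
\]

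It then remains to compare this with the target right-hand side $\ln^2(k+1)+(1+2\ln(k+1))S$. Subtracting the common terms, the claim reduces to
\[
2\ln(k+1)\sqrt{S}\le 2\ln(k+1)\,S\ ,
\]
i.e. (using $\ln(k+1)\ge 0$) to $\sqrt{S}\le S$. This is the crux, and here is where the hypothesis $a_0=1$ does the essential work: it forces $S=\sum_{i=0}^k a_i^2\ge a_0^2=1$, and for $S\ge 1$ one indeed has $\sqrt{S}\le S$. The $k=0$ case is harmless, since then $\ln(k+1)=0$ makes the reduced inequality trivially $0\le 0$.

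I do not expect any genuine obstacle; the only non-mechanical observation is the final reduction to $\sqrt{S}\le S$ and the recognition that $a_0=1$ is present precisely to pin $S$ below by $1$. The two supporting steps (the $(k+1)e^{\max}$ bound and $\max_i a_i\le\sqrt{S}$) are routine, and the positivity needed to justify squaring follows immediately from $a_0=1>0$. I would present the argument in the order above, keeping the substitution $S=\sum_i a_i^2$ throughout to make the cancellation transparent.
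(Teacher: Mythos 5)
Your proof is correct and takes essentially the same route as the paper's: both bound the sum by $(k+1)e^{\max_i a_i}$, expand the resulting square $\bigl(\ln(k+1)+\max_i a_i\bigr)^2$, and use the hypothesis $a_0=1$ to absorb the cross term into the quadratic term. The only cosmetic difference is that the paper stays with $a_{\max}$ (using $a_{\max}\ge 1$, hence $a_{\max}\le a_{\max}^2\le\sum_i a_i^2$), while you pass immediately to $\sqrt{\sum_i a_i^2}$ and use $\sum_i a_i^2\ge 1$ to get $\sqrt{S}\le S$.
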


\begin{lemma}
\label{LEM:TECHLEM2}
For real positive $x_1,x_2$, the following holds:
\[ \ln^2(x_1+x_2)\leq \ln^2(e+x_1+x_2)+\ln^2 x_2\ .\]
\end{lemma}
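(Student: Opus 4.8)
The plan is to reduce everything to the behaviour of the single-variable map $\phi(t) = \ln^2 t$, which is strictly decreasing on $(0,1]$, strictly increasing on $[1,\infty)$, and attains its minimum value $0$ at $t=1$. Writing $s = x_1 + x_2$, I would split into the two cases $s \ge 1$ and $0 < s < 1$, and in each case show that one of the two terms kept on the right-hand side already dominates $\ln^2(x_1+x_2)$, so that the remaining nonnegative term only helps.

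In the case $s \ge 1$ I would argue using only the first term $\ln^2(e+s)$. Since $s \ge 1$ we have $\ln s \ge 0$, and since $e + s > s \ge 1$ we have $\ln(e+s) > \ln s \ge 0$; as both quantities are nonnegative and $\phi$ is increasing on $[1,\infty)$, squaring preserves the inequality and yields $\ln^2(s) \le \ln^2(e+s)$. Adding the nonnegative quantity $\ln^2 x_2$ to the right-hand side then completes this case.

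In the case $0 < s < 1$ the first term is of no use, and the whole point is to exploit the second term $\ln^2 x_2$. The key observation — and the only place positivity of $x_1$ enters — is that $x_2 < x_1 + x_2 = s$, so $0 < x_2 < s < 1$. Because $\phi$ is decreasing on $(0,1]$ (equivalently, $\ln x_2 < \ln s < 0$, hence $|\ln x_2| > |\ln s|$), this gives $\ln^2 x_2 \ge \ln^2 s$. Adding the nonnegative term $\ln^2(e+s)$ again finishes the case.

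There is essentially no obstacle here: the whole argument is two applications of the monotonicity of $\ln^2$ on the two sides of $t=1$. The only subtlety worth flagging is that one must keep a \emph{different} right-hand term in each regime — $\ln^2(e+s)$ when the sum is large and $\ln^2 x_2$ when the sum is small — and that the strict inequality $x_2 < x_1 + x_2$, valid precisely because $x_1 > 0$, is what makes the small-sum case go through. The boundary value $s = 1$ is harmless, since then $\ln^2 s = 0$ and the bound is immediate.
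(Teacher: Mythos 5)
Your proof is correct and follows essentially the same route as the paper's own argument: split on whether $x_1+x_2\geq 1$, use $\ln^2(x_1+x_2)\leq \ln^2(e+x_1+x_2)$ in the large-sum case, and use $0<x_2<x_1+x_2<1$ together with the decreasing behaviour of $\ln^2$ on $(0,1]$ to get $\ln^2(x_1+x_2)\leq \ln^2 x_2$ in the small-sum case. The only difference is presentational: you make the monotonicity of $t\mapsto\ln^2 t$ on either side of $t=1$ explicit, which the paper leaves implicit.
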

\section{Proof of Theorem~\ref{THM:SPD}}

We will use results from stochastic approximation theory to prove our results.
Broadly speaking, stochastic approximation theory states that, under suitable
conditions, discrete iterations with random variables converge to an equilibrium
point of an equivalent deterministic ODE (ordinary differential equation)~\cite{borkarstochapp}. In other
words, to show that a discrete-stochastic update rule converges to a desired
point, we need to show the following steps~\cite{borkarstochapp}: {\em (i)} define an equivalent deterministic
ODE, {\em (ii)} argue that the ODE has an equilibrium point identical to to the
underlying discrete-stochastic update rule, and {\em (iii)} argue that conditions for
limiting behavior of discrete-stochastic update rule to be similar to that of the ODE
is satisfied (these conditions are provided by stochastic approximation theory).

{\bf Defining a limiting ODE:} 
Let $m$ be the dimension of ${\mathbf z}$ or equivalently the number of primal
variables. First, define the following operator $\Gamma_{{\mathbf z}}({\mathbf y})$ parametrized by
primal variable ${\mathbf z}$:
\begin{equation*}
\Gamma_{{\mathbf z}}({\mathbf y})
=\left[ 
\Gamma_{z_1}^{(1)}(y_1)\ \ 
\Gamma_{z_2}^{(2)}(y_2)\hdots
\Gamma_{z_m}^{(m)}(y_m)\ \ 
\hdots
\right]^t\ ,
\end{equation*}
where
\begin{equation*}
\Gamma_{z_i}^{(i)}(y_i)
=\left\{
\begin{array}{cl}
y_i\ind{y_i>0} &,\ \text{if}\ z_i=z_{i,\min} \\
y_i\ind{y_i<0} &,\ \text{if}\ z_i=z_{i,\max} \\
y_i &,\ \text{else}
\end{array}
\right.
\end{equation*}
In the above, $z_{i,\min}$ and $z_{i,\max}$ are the lower and the upper bounds of the
corresponding primal variables. Similarly, define the functions $\Gamma_{{\mathbf p}}^+({\mathbf
q})$ parametrized by ${\mathbf p}=[p_1,\ p_2,\ \hdots p_n]$ as
\begin{align*}
\Gamma^+_{{\mathbf p}}({\mathbf q})&=
\left[ \Gamma_{p_1}^+(q_1)\ \ \Gamma_{p_2}^+(q_2)\ \hdots \ \Gamma_{p_n}^+(q_n)\right]
\end{align*}
where,
\begin{align*}
\Gamma_{p_j}^+(q_j)
&=\left\{
\begin{array}{cl}
q_j\ind{q_j>0} &,\ \text{if}\ p_j=0 \\
q_j &,\ \text{if}\ p_j>0 \\
\end{array}
\right.
\end{align*}

We now define the following equivalent ODE for the iterative update rule in
Algorithm~IoTC-SDP.
\begin{align}
\begin{bmatrix}
{\mathbf {\dot z}}(t) \\ 
{\mathbf {\dot p}}(t)
\end{bmatrix}
& = \begin{bmatrix}
\Gamma_{{\mathbf z}(t)}\left(\Ex[{\mL}_{{\mathbf z}}({\mathbf z}(t), 
h({\mathbf z}(t), {\mathbf O}, {\boldsymbol \Lambda}),{\mathbf p}(t))]\right) \nonumber \\
\Gamma_{{\mathbf p}(t)}^+(\Ex[h({\mathbf z}(t), {\mathbf O}, {\boldsymbol \Lambda})])
\end{bmatrix}\nonumber \\
& \label{eqn:ode}
\end{align}

{\bf Asymptotic equivalence of ODE:} 
The fundamental result in stochastic approximation theory~\cite{borkarstochapp} implies that
the ODE defined in~(\ref{eqn:ode}) has similar asymptotic behavior to the updates
defined in Algorithm~IoTC-SL provided certain conditions hold. The following lemma
establishes this equivalence by verifying these conditions in our setup.

\begin{lemma}
\label{LEM:SA-SPD}
The stochastic primal-dual update rule of
Algorithm~IoTC-SL given
by~(\ref{eqn:pud})~and~(\ref{eqn:dud}) converges to the invariant set of ODE given
by~(\ref{eqn:ode}).
\end{lemma}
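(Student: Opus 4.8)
The plan is to recognize the update rules~(\ref{eqn:pud}) and~(\ref{eqn:dud}) as a \emph{projected} stochastic approximation scheme and to invoke the ODE method of Borkar~\cite{borkarstochapp}. Combining the primal and dual iterates into a single vector $\mathbf{w}_n = (\mathbf{z}_n, \mathbf{p}_n)$, I would write both updates in the canonical form
\[
\mathbf{w}_{n+1} = \Gamma\!\left(\mathbf{w}_n + a_n\bigl(g(\mathbf{w}_n) + M_{n+1}\bigr)\right),
\]
where $g(\mathbf{w})$ is the expected drift (the gradient of $\Ex[\mL]$ in the primal block and $\Ex[h]$ in the dual block), $\Gamma$ is the componentwise projection onto the primal box of Lemma~\ref{LEM:PUB} and the nonnegative dual orthant, and $M_{n+1}$ is the deviation of the sampled quantity from its conditional mean. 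The whole proof then reduces to verifying the standard hypotheses under which the discrete recursion asymptotically tracks the mean-field ODE~(\ref{eqn:ode}).

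First I would exhibit the martingale-difference structure of the noise. Writing $M_{n+1} = F(\mathbf{z}_n, \boldsymbol{\chi}, \boldsymbol{\xi}) - \Ex[F(\mathbf{z}_n, \mathbf{O}, \boldsymbol{\Lambda})]$, where $F$ collects the sampled constraint/gradient terms, I observe that the coin tosses $\boldsymbol{\chi}$ and the path-loss samples $\boldsymbol{\xi}$ drawn at iteration $n$ are independent of the history $\mathcal{F}_n = \sigma(\mathbf{w}_0,\ldots,\mathbf{w}_n)$ and are distributed according to the true laws of $O_{e\rightarrow c}$ and $(L_e, L_{e\rightarrow c})$. Hence $\Ex[M_{n+1}\mid \mathcal{F}_n] = 0$, so $\{M_{n+1}\}$ is a martingale-difference sequence and the drift of the recursion is exactly $g$, matching the right-hand side of~(\ref{eqn:ode}).

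Next I would check the regularity conditions. The step sizes $a_n = 1/n^{\zeta}$ with $0.5 < \zeta \leq 1$ give $\sum_n a_n = \infty$ and $\sum_n a_n^2 < \infty$. Boundedness of the iterates is automatic, since $\Gamma$ confines $\mathbf{z}_n$ to the compact box of Lemma~\ref{LEM:PUB} and keeps $\mathbf{p}_n$ in the nonnegative orthant. The mean field $g$ is Lipschitz on this compact region: the only nonlinear piece is the interference term $\ln\bigl(\sum_{e\in\mJ_c} O_{e\rightarrow c}\, e^{(\pi_e + \alpha_e \Lambda_e - \Lambda_{e\rightarrow c})} + N_0\bigr)$, whose partial derivatives in $\pi_e$ and $\alpha_e$ are bounded by $1$ and by $|\Lambda_e|$ respectively, so that, after taking expectations, finite first moments of the path losses yield a uniformly bounded (hence Lipschitz) drift. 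The one genuinely delicate requirement is the bounded conditional second moment of the noise, $\Ex[\|M_{n+1}\|^2 \mid \mathcal{F}_n] \leq K(1+\|\mathbf{w}_n\|^2)$.

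The hard part is precisely this second-moment bound, and it is what the two technical lemmas of Appendix~A supply. The squared norm of $M_{n+1}$ inherits from the interference constraint terms of the form $\ln^2\!\bigl(\sum_{e} O_{e\rightarrow c}\, e^{(\pi_e + \alpha_e \Lambda_e - \Lambda_{e\rightarrow c})} + N_0\bigr)$. Treating the noise floor $N_0$ as the normalized summand (the role of $a_0=1$ in Lemma~\ref{LEM:TECHLEM1}, with Lemma~\ref{LEM:TECHLEM2} used to fold $N_0$ and the exponential sum into the required shape), I would bound this quantity by a constant plus a weighted sum of the squared exponents $(\pi_e + \alpha_e \Lambda_e - \Lambda_{e\rightarrow c})^2$. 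Taking conditional expectations and using that $\pi_e,\alpha_e$ lie in the compact projection region while $\Lambda_e,\Lambda_{e\rightarrow c}$ have finite second moments under the sampling histograms renders every term finite, giving a uniform bound $\Ex[\|M_{n+1}\|^2\mid\mathcal{F}_n]\le K$. With the martingale-difference property, the step-size conditions, a Lipschitz bounded drift, almost surely bounded iterates, and this noise bound all in place, Borkar's theorem for projected stochastic approximation applies and yields that $\{(\mathbf{z}_n,\mathbf{p}_n)\}$ almost surely tracks the projected ODE~(\ref{eqn:ode}) and converges to its invariant set, which is the assertion of Lemma~\ref{LEM:SA-SPD}.
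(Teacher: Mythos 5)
Your proposal is correct and follows essentially the same route as the paper: casting~(\ref{eqn:pud})--(\ref{eqn:dud}) as a projected stochastic approximation, invoking Borkar's ODE method from~\cite{borkarstochapp}, and verifying the Lipschitz drift and second-moment noise conditions via Lemmas~\ref{LEM:TECHLEM1} and~\ref{LEM:TECHLEM2} together with the compactness guaranteed by Lemma~\ref{LEM:PUB}. Your explicit verification of the martingale-difference structure and the step-size summability conditions is bookkeeping the paper leaves implicit, but it does not change the argument.
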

\begin{proof}
We introduce some notations to start with.  Rewrite the iterative update
steps~(\ref{eqn:pud})~and~(\ref{eqn:dud}) in Algorithm~IoTC-SL in a compact form as
\begin{align*}
\begin{bmatrix} \bz_{n+1} \\ \bp_{n+1} \end{bmatrix}
& =  \mP\left(\begin{bmatrix} \bz_{n} \\ \bp_{n} \end{bmatrix}
+ a_n M_n \right)
\end{align*}
where,
\begin{equation}
M_n=
\begin{bmatrix}
\nabla_{\bz}\mL({\mathbf  z}_n, 
h({\mathbf z}_n, {\boldsymbol \chi}, {\boldsymbol \xi}),{\mathbf p}_n )\\
h({\mathbf z}_n, {\boldsymbol \chi}, {\boldsymbol \xi})
\end{bmatrix}
\end{equation}
and $\mP(.)$ denotes the projection of the primal variables into the compact
set defined by bounds in Lemma~\ref{LEM:PUB} and dual variables into the positive
axis.

According to Chapter~5.4~\cite{borkarstochapp}, the candidate equivalent ODE is given
by

\begin{align*}
\begin{bmatrix}
{\mathbf {\dot z}}(t) \\ 
{\mathbf {\dot p}}(t)
\end{bmatrix}
& = \Gamma_{(\bz(t),\bp(t))}\begin{bmatrix}
\left(\Ex[{\mL}_{{\mathbf z}}({\mathbf z}(t), 
h({\mathbf z}(t), {\mathbf O}, {\boldsymbol \Lambda}),{\mathbf p}(t))]\right) \nonumber \\
\Ex[h({\mathbf z}(t), {\mathbf O}, {\boldsymbol \Lambda})]
\end{bmatrix}\nonumber \\
\end{align*}
where
\[\Gamma_{(\bx)}(\by)=\lim_{\delta \downarrow 0}
\frac{\mP(\bx +\delta \by)-\mP({\by})}{\delta}\ .\]
One can verify that this candidate equivalent ODE is precisely the one defined
by~(\ref{eqn:ode}). Also, with a step-size of $a_n=1/n$,  for iterative update
steps~(\ref{eqn:pud})~and~(\ref{eqn:dud}) in Algorithm~IoTC-SL to converge to an
invariant set of the candidate ODE, we need to verify 
two conditions for our case (Chapter~5.4~\cite{borkarstochapp}):
\begin{itemize}

\item {\bf C1:} The functions, $\Gamma_{{\mathbf z}(t)}\left(\Ex[{\mL}_{{\mathbf z}}({\mathbf z}, 
h({\mathbf z}, {\mathbf O}, {\boldsymbol \Lambda}),{\mathbf p})]\right) $ and
$\Gamma_{{\mathbf p}}^+(\Ex[h({\mathbf z}, {\mathbf O}, {\boldsymbol \Lambda}))$ are 
Lipschitz\footnote{\scriptsize A map $\bx \rightarrow F(\bx)$ is called Lipschitz if
$\norm{F(\bx_1)-F(\bx_2)}\leq K\norm{\bx_1 - \bx_2}$ for scalar constant
$K\in (0,\infty)$.} in ${\mathbf z}$ and ${\mathbf p}$.

\item {\bf C2:} The random variable $M_n$ satisfies

$$\Ex[\normsq{M_n}]\leq K(1+\normsq{{\mathbf z}_n}+\normsq{{\mathbf p}_n})$$
for some constant $K>0$.

\end{itemize}

The conditions are not difficult to verify for our problem. We outline the
steps in the following.

{\em Verifying} {\bf C1:} We need to show the Lipschitz continuity of the r.h.s.
of~(\ref{eqn:ode}) in ${\mathbf z}$ and ${\mathbf p}$. First note that the
ODE~(\ref{eqn:ode}) is defined such that ${\mathbf z(t)},t\geq 0$ lies within the
compact set defined in Lemma~\ref{LEM:PUB} (provided the initial conditions are also
within that set) from which it follows that all terms (in the r.h.s. of ~(\ref{eqn:ode})) 
containing the primal variables
are bounded. Also, the r.h.s of~(\ref{eqn:ode}) is a linear function of the dual
variables ${\mathbf p}$. One can combine these two facts and argue easily l.h.s.
of~(\ref{eqn:ode}) is Lipschitz in ${\mathbf z}$ and ${\mathbf p}$. We skip the
details.

{\em Verifying} {\bf C2:} Note from~(\ref{eqn:compcons}) that, the randomness in
$h({\mathbf z}_n, {\boldsymbol \chi}, {\boldsymbol \xi})$ is only due to randomness
in $Z_{c,n}$ where
$$Z_{c,n}=\ln \left( \sum_{e\in \mJ_c} \chi_e e^{(\pi_{e,n} +\alpha_{e,n} \xi_e -
\xi_{e\rightarrow c})} + N_0 \right).$$
 We can use the boundedness of the primal variables ${\mathbf z}$ to show that 
\begin{equation}
\label{eqn:mn}
\Ex[M_n^2]\leq K_1 + \sum_{c\in\mC}K_2\Ex[Z_{c,n}^2]\ ,
\end{equation}
for suitable positive constants $K_1$ and $K_2$. 
To see this, first observe the following.
\begin{align*}
&\Ex[Z_{c,n}^2]\\
&\leq\Ex\left[\ln^2 \left( e + \sum_{e\in \mJ_c} \chi_e e^{(\pi_{e,n} +\alpha_{e,n} \xi_e -
\xi_{e\rightarrow c})} + N_0 \right)\right]+\ln^2 N_0 
\end{align*}
where we have applied Lemma~\ref{LEM:TECHLEM2}.
Define $A>1$ as the upper bound on the path loss in linear-scale in the entire network.
Now notice that the function $f(x)=\ln^2(x+e)$ is concave for $x>0$. By
applying Jensen's inequality and definition of $A$,
\begin{align*}
&\Ex[Z_{c,n}^2]\\
&\leq\ln^2\left(\sum_{e\in \mJ_c} \Ex\left[\chi_e e^{(\pi_{e,n} +\alpha_{e,n} \xi_e -
\xi_{e\rightarrow c})}\right] + e + N_0 \right) +\ln^2 N_0 \\ 
&\leq\ln^2\left(A\sum_{e\in\mJ_c}e^{\pi_{e,n}} + e + N_0 \right)+\ln^2 N_0 \\ 
\end{align*}
Applying Lemma~\ref{LEM:TECHLEM1} to the previous step, we have for suitable constants 
$K_3$ and $K_4$ (i.e., $K_3,\ K_4$ are independent
of $\pi_{e,n}$'s),
\begin{align*}
\Ex[Z_{c,n}^2]
&\leq K_3 + K_4\sum_{e\in\mJ_c} \ln^2(e^{\pi_{e,n}})
= K_3 + K_4 \sum_{e\in \mJ_c}\pi^2_{e,n}
\end{align*}
The above combined with~(\ref{eqn:mn}) implies that,
\[\Ex[M_n^2]\leq K_5 + K_6 \normsq{{\boldsymbol \pi_n}}\]
for system dependent constants $K_5,K_6$ (i.e., $K_5,\ K_6$ could depend on the
network topology, but they are independent of the primal and dual variables). Condition
{\bf C2} is thus verified.

The lemma is proved since we have verified {\bf C1} and {\bf C2}.
\end{proof}

{\bf Global Stability of the ODE:}
Now that we have argued that Algorithm~IoTC-SL behaves like the ODE~(\ref{eqn:ode})
asymptotically, we prove that ODE~(\ref{eqn:ode}) converges to the optimal point
of problem~IoTC-SCP. 

\begin{lemma}
\label{LEM:STABLE}
The ODE given by~(\ref{eqn:ode}) is globally asymptotically stable and the system
converges to saddle point solution of $\mL({\mathbf z}, {\mathbf p})$ which also corresponds to
the optimai value of ${\mathbf z}$.
\end{lemma}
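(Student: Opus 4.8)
The plan is to prove global asymptotic stability by exhibiting a quadratic Lyapunov function for the primal--dual flow~(\ref{eqn:ode}) and then invoking LaSalle's invariance principle, which is the standard route for Arrow--Hurwicz--Uzawa saddle-point dynamics.

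First I would record existence of a saddle point. Since IoTC-SCP is a convex program with a concave objective and Slater's condition holds (all inequality constraints can be satisfied strictly, e.g.\ by taking each $\pi_c$ small, $\theta_c$ near $\ln I_{max}$, and $\gamma_c(b)$ strictly below the right-hand side of~(\ref{eqn:con1})), strong duality~\cite{convopt} yields a primal--dual pair $(\bz^*,\bp^*)$ that is a saddle point of $\Ex[\mL]$, as already used in~(\ref{eqn:sp}):
\[ \Ex[\mL(\bz,\bp^*)] \;\leq\; \Ex[\mL(\bz^*,\bp^*)] \;\leq\; \Ex[\mL(\bz^*,\bp)] \]
for all admissible $\bz$ and all $\bp\geq 0$. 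This $\bz^*$ is exactly the optimizer named in Theorem~\ref{THM:SPD}.

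Next I would take the radially unbounded Lyapunov candidate $W(\bz,\bp)=\smfrac{1}{2}\normsq{\bz-\bz^*}+\smfrac{1}{2}\normsq{\bp-\bp^*}$ and differentiate along~(\ref{eqn:ode}). Writing $\mathbf{g}=\Ex[\nabla_{\bz}\mL]$ and $\mathbf{h}=\Ex[h]=-\Ex[\nabla_{\bp}\mL]$, the first observation is that the projection operators $\Gamma_{\bz}$ and $\Gamma^+_{\bp}$ only null out components that would drive a variable out of its box; since $\bz^*,\bp^*$ are themselves feasible, a componentwise check gives $(\bz-\bz^*)^t\Gamma_{\bz(t)}(\mathbf{g})\leq(\bz-\bz^*)^t\mathbf{g}$ and $(\bp-\bp^*)^t\Gamma^+_{\bp(t)}(\mathbf{h})\leq(\bp-\bp^*)^t\mathbf{h}$, so the projections can only make $\dot W$ more negative and it suffices to bound the unprojected inner product. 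Using concavity of $\Ex[\mL]$ in $\bz$ and linearity (hence convexity) in $\bp$, together with the first-order subgradient inequalities, I obtain
\[ \dot W \leq \bigl(\Ex[\mL(\bz,\bp)]-\Ex[\mL(\bz^*,\bp)]\bigr)+\bigl(\Ex[\mL(\bz,\bp^*)]-\Ex[\mL(\bz,\bp)]\bigr)=\Ex[\mL(\bz,\bp^*)]-\Ex[\mL(\bz^*,\bp)]. \]
By the saddle-point sandwich above this difference is $\leq 0$, so $\dot W\leq 0$ everywhere; in particular $W$ is nonincreasing, the trajectories stay bounded, and $(\bz^*,\bp^*)$ is Lyapunov stable.

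Finally I would apply LaSalle: every trajectory converges to the largest invariant subset of $\{\dot W=0\}$. On that set both bracketed terms vanish, forcing $\Ex[\mL(\bz,\bp^*)]=\Ex[\mL(\bz^*,\bp^*)]$, so $\bz$ also maximizes $\Ex[\mL(\cdot,\bp^*)]$; the strict concavity of $V(\gamma)=\ln\ln(1+e^{\gamma})$ then pins the $\gamma$-coordinates to $\gamma^*$, and freezing the complementary-slackness pattern along the invariant trajectory reduces the remaining stationarity (KKT) conditions to those of a strictly concave unconstrained program, whose only invariant point is $\bz^*$. \emph{The hard part is exactly this last step}: because the Lyapunov inequality is only non-strict, one must rule out persistent invariant oscillations in $\{\dot W=0\}$, a genuine hazard for primal--dual flows with a Lagrangian that is linear in the multipliers. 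I expect to close it precisely through the strict concavity of $V$ combined with the frozen active-set structure on any invariant set, which collapses the candidate invariant manifold to the single point $(\bz^*,\bp^*)$. Global asymptotic stability, and hence $\bz(t)\to\bz^*$, then follows.
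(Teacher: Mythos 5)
Your proposal follows essentially the same route as the paper's proof: the same quadratic Lyapunov function $W=\normsq{\bz-\bz^*}+\normsq{\bp-\bp^*}$, the same componentwise projection inequalities $(\bz-\bz^*)^t\Gamma_{\bz}(\cdot)\leq(\bz-\bz^*)^t(\cdot)$ and $(\bp-\bp^*)^t\Gamma^+_{\bp}(\cdot)\leq(\bp-\bp^*)^t(\cdot)$, the same concavity/convexity first-order bounds yielding $\dot W\leq G(\bz(t),\bp^*)-G(\bz^*,\bp(t))\leq 0$, and the same appeal to LaSalle's invariance principle. The one point of divergence is instructive: the ``hard part'' you flag --- ruling out nontrivial invariant sets inside $\{\dot W=0\}$, a genuine hazard since the Lagrangian is linear in $\bp$ and strictly concave only in the $\gamma$-coordinates of $\bz$ --- is exactly the step the paper disposes of by asserting, without argument, that $\dot W=0$ holds iff $(\bz(t),\bp(t))=(\bz^*,\bp^*)$; neither your sketch nor the paper actually executes that invariant-set analysis, so your explicit acknowledgment of the gap is the more careful account of what is otherwise the identical proof.
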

\begin{proof}
Consider the Lyapunov function 
\[W(t)
=\normsq{\bz(t)-\bz^*} + \normsq{\bp(t)-\bp^*}\ ,
\]
where, ${\bz}^*$ and ${\bp}^*$ are the primal and dual optimal solution of the
problem IoTC-SCP. Now note the following.

\begin{align}
\dot{W}(t)&=
2(\bz(t)-\bz^*)^t \dot{\bz}(t) + 2(\bp(t)-\bp^*)^t \dot{\bp}(t)
\nonumber \\
&= 2(\bz(t)-\bz^*)^t 
\Gamma_{{\mathbf z}(t)}\left(\Ex[{\mL}_{{\mathbf z}}({\mathbf z}(t), 
h({\mathbf z}(t), {\mathbf O}, {\boldsymbol \Lambda}),{\mathbf p}(t))]\right)
\nonumber \\
& \ \ + 2(\bp(t)-\bp^*)^t 
\Gamma_{{\mathbf p}(t)}^+(\Ex[h({\mathbf z}(t), {\mathbf O}, {\boldsymbol \Lambda})])
\label{eqn:wdot1}
\end{align}
Now note from the definition of $\Gamma_{\bz}(.)$ that,
\[(z_{i}-z_i^*)\Gamma_{z_i}^{(i)}(y_i)\leq (z_{i}-z_i^*)y_i \ ,\]
since $z_{i,\min}\leq z_i^* \leq z_{i,\max}$. Also,
\[(p_{i}-p_i^*)\Gamma_{p_i}^+(q_i)\leq (p_{i}-p_i^*)q_i \ .\]
It thus follows from~(\ref{eqn:wdot1}) that
\begin{align}
&\dot{W}(t)
\leq 2(\bz(t)-\bz^*)^t 
\Ex[\nabla_{\bz}{\mL}({\mathbf z}(t), 
h({\mathbf z}(t), {\mathbf O}, {\boldsymbol \Lambda}),{\mathbf p}(t))]
\nonumber \\
& \ \ - 2(\bp(t)-\bp^*)^t 
\Ex[\nabla_{\bp}\mL({\mathbf z}(t), 
h({\mathbf z}(t), {\mathbf O}, {\boldsymbol \Lambda}),{\mathbf p}(t))]
\label{eqn:wdot2}
\end{align}
For notational convenience, let $G(.)$ denote the function
$$G(\bz,\bp)=\Ex[{\mL}({\mathbf z}, 
h({\mathbf z}, {\mathbf O}, {\boldsymbol \Lambda}),{\mathbf p})]\ .$$ 
Note that $G(\bz,\bp)$ is concave in 
$\bz$ and convex in $\bp$. Thus, from the proerty of convex and concave
functions\cite{convopt}
\[(\bz - \bz^*)^t \nabla_{\bz}G(\bz,\bp)\leq G(\bz,\bp)-G(\bz^*,\bp)\]
and
\[(\bp - \bp^*)^t \nabla_{\bp}G(\bz,\bp)\geq G(\bz,\bp)-G(\bz,\bp^*)\ ,\]
which applied to~(\ref{eqn:wdot2}) implies
\begin{align}
&\dot{W}(t) 
\nonumber\\
&\leq G(\bz(t),\bp(t))-G(\bz^*,\bp(t)) - (G(\bz(t),\bp(t))-G(\bz(t),\bp^*))
\nonumber\\
&= G(\bz(t),\bp^*)-G(\bz^*,\bp(t))
\label{eqn:wdot3}
\end{align}
Since $(\bz^*,\bp^*)$ is a saddle point solution of 
$$\max_{\bz}\min_{\bp\geq 0}G(\bz,\bp)=\min_{\bp\geq 0}\max_{\bz}G(\bz,\bp)\ ,$$
we have,
$ G(\bz,\bp^*) \leq G(\bz^*,\bp^*) \leq G(\bz^*,\bp)$
for any $\bz$ and $\bp\geq 0$, 
from which it follows that
\[\dot{W}(t)\leq 0\ ,\]
with equality iff $\bz(t)=\bz^*$ and $\bp(t)=\bp^*$.
Thus, it follows from La'Salle's invariance principle~\cite{khalilnls} that the ODE given
by~(\ref{eqn:ode}) converges to $(\bz^*,\bp^*)$.

\end{proof}

{\bf Putting it all together: Proof of Theorem~\ref{THM:SPD}.} Lemma~\ref{LEM:SA-SPD}
shows that Algorithm~IoTC-SL converges to the invaraint set of ODE~(\ref{eqn:ode}) and
Lemma~\ref{LEM:STABLE} shows that the ODE converges to the optimal point of problem
IoTC-SCP. Thus Algorithm~IoTC-SL converges to the optimal point of problem
IoTC-SCP. Hence the proof.


\end{document}